\newcommand\IfExtended[1]{#1}
\newcommand\IfNotExtended[1]{}
\author{Johannes Schoisswohl 
    \institute{ University of Manchester, UK \and TU Wien, Austria }
    \email{johannes.schoisswohl@manchester.ac.uk}
   \and Laura Kovács
     \institute{ TU Wien, Austria } 
    \email{laura.kovacs@tuwien.ac.at}
   }
   \title{Automating Induction by Reflection}
\begin{document}
  \maketitle

  \begin{abstract}
    Despite recent advances in automating  theorem proving in full
first-order theories, inductive reasoning still poses a serious challenge to state-of-the-art theorem provers. 
The reason for that is that in first-order logic induction
requires an infinite number of axioms, which is not a feasible
input to a computer-aided theorem prover requiring a finite input. 
Mathematical practice is to specify these infinite sets of axioms as axiom schemes. 
Unfortunately these schematic definitions cannot be formalized in
first-order logic, and therefore not supported as  inputs for
first-order  theorem provers.

In this work we introduce a new method, inspired by the field of
axiomatic theories of truth, that allows to express schematic
inductive definitions, in the standard syntax of multi-sorted first-order logic. 
Further we test the practical feasibility of the method with state-of-the-art theorem provers, comparing it to solvers' native techniques for handling induction.

    This paper is an extended version of the LFMTP 21 submission with the same title.
  \end{abstract}

  \section{Introduction}

Automated reasoning has advanced tremendously in the last decades, pushing the limits of what computer programs can prove about other computer programs. 
Recent progress in this area features techniques such as 
first-order reasoning with term algebras~\cite{POPL17Kovacs}, 
embedding programming control structures in first-order logic~\cite{FOOLKotelnikovKRV16},
the \avatar{} architecture~\cite{AVATAR14},
combining theory instantiation and unification with abstraction in saturation-based proof search~\cite{TheoryInstUWA18},
automating proof search for higher-order logic~\cite{Zipperposition,BhayatR20a},
and first-order logic with rank-1 polymorphism~\cite{BhayatR20b}.

Despite the fact that first-order logic with equality can be handled very well in many practical cases, there is one fundamental mathematical concept most first-order theorem provers lack; namely inductive reasoning. Not only is induction a very interesting theoretical concept, it is also of high practical importance, since it is required to reason about software reliability, safety and security, see e.g.~\cite{HoderBM11,Maffei18,Sagiv19,PickFG20,KovacsFMCAD20}.
Such and similar sofwtare requirememt typically involve properties over natural numbers, recursion, unbounded loop iterations, or recursive data structures like lists, and trees.

Many different approaches towards automating inductive reasoning have been proposed, ranging from 
cyclic proofs~\cite{EchenimP20,KersaniP13}, 
computable approximations of the $\omega$-rule~\cite{BakerIrelandSmaill92},
recursion analysis~\cite{Aubin79,Cruanes17,Moore19}, 
theory exploration~\cite{HipSpec,ReynoldsK15}, 
inductive strengthening~\cite{Leino12,ReynoldsK15}, 
and integrating induction in saturation based theorem proving~\cite{InductionWithGeneralization20,Cruanes17,VampireInduction19}. 
What all these approaches have in common is that they tackle the problem of inductive theorem proving by specializing the proof system. In the present paper,  we propose a different approach. Instead of changing the proof system, we will change our input problem, replacing the infinite induction scheme by a finite conservative extension. Hence our approach is not tailored to one specific reasoner, but can be used with any automated theorem prover for first-order logic.


In order to achieve such a generic approach,  we use ideas from axiomatic theories of truth~\cite{Horsten11}.  
As G\"odel's incompleteness theorem tells us that there is a formula $Bew(x)$ in Peano Arithmetic  $\PA$
that expresses provability in $\PA$, Tarski's undefinability theorem teaches us that there is no formula $T(x)$ that expresses truth in $\PA$. 
Extending the language of $\PA$, in order to be able to express truth in $\PA$ is the core idea of axiomatic theories of truth.
The truth theory we introduce in this paper will not be an extension of \PA{}, but an extension of an arbitrary theory. 
This theory will let us express the sentence ``for all formulas, the induction scheme is true'', within first order logic.

As our experimental evaluation shows our technique does not outperform state of the art built-in methods for inductive reasoning in general, we will see that there are cases where an improvement can be achieved.

The contributions of our work can be summarized as follows:
\begin{itemize}
  \item We introduce a method for conservatively extending an arbitrary theory with a truth predicate
    (Section~\ref{sect:reflExt}); 
  \item We show how our method can be used to replace the induction scheme of \PA{}, and other theories involving inductive datatypes
    (Section~\ref{sect:finiteInd}); 
  \item We provide a new set of benchmarks to the automated theorem proving community
    (Section~\ref{sect:experiments}); 
  \item We conduct experiments on this set of benchmarks with state-of-the-art theorem provers. 
    (Section~\ref{sect:experiments})
\end{itemize}

  
  \section{Preliminaries}
  
We assume familiarity with multi-sorted first-order logic and automated theorem proving; for details, we refer to~\cite{Vampire13,BhayatR20b}. 

We consider reasoning on the object and the meta level. Therefore we will use different symbols for logic on all of these levels.  We use the symbols 
$\onot$, 
$\oand$, 
$\oor$, 
$\oimpl$,
$\oiff$, and
$\oeq$
for negation, conjunction, disjunction, implication, equivalence, and equality respectively, and write $Q x: \sigma. \phi$ with $Q \in \s{\oexists, \oforall}$ for existential and universal quantification over the sort $\sigma$ on the object level.
We will drop sort declarations for quantified variables when there is no ambiguity. Further we will write $\universalClosure[x_1, \ldots, x_n]\phi$ for the formula $\forall x_1. \ldots \forall x_n . \phi$, and $\universalClosure\phi$ to denote the universal closure of $\phi$. 

On the meta-level we will use 
$\mnot$, 
$\mand$, 
$\mor$, 
$\mimpl$,
$\miff$, and
$=$
for negation, conjunction, disjunction, implication, equivalence, and equality and $\mforall$, and $\mexists$ for quantification. Meta-level logical formulas will only be used where they help to improve readability and precision, and otherwise natural language will be used. 

By $\ovar\sigma$, $\oterm\sigma$, and $\oform$, we respectively denote the sets of variables of sort $\sigma$, terms of sort $\sigma$ and object-level formulas over a signature $\Sigma$. As $\ovar\sigma$ is a countably infinite set, we assume without loss of generality that it is composed of the variables $\s{\x i^\sigma \mid i \in \mathbb N}$, and leave away the sort superscript $\sigma$, if it is clear from the context. 

For a function symbol $f$ from signature $\Sigma$, we write $f\softype\sigma_1 \sprod \ldots \sprod \sigma_n \sarrow \sigma \in \Sigma$, to denote that $\sdom(f) = \sigma_1\times \ldots \times\sigma_n$ is the domain of $f$, $\scodom(f) = \sigma$ is its codomain, and $\sarity(f) = n$ is the arity of $f$. We consider constants being functions of arity 0 and write $c\softype\sigma$ for $c\softype\sarrow\sigma$. Further we write $P: \spred{\sigma_n \sprod  \ldots  \sprod\sigma_n}$ to denote that $P$ is a predicate with domain $\sdom(P) =  \sigma_1\times \ldots \times\sigma_n$ and arity $\sarity(P) = n$. Further, we write $\sdom(s, i)$ to refer to the $i$th component of the domain of $s$.

Given formula/term $\phi$, a variable $x$ and a term $t$, we write $\phi[x\mapsto t]$ to denote the formula/term resulting from replacing all occurences of $x$ by $t$ in $\phi$. Similarly, if $x$ is a variable and $\phi[x]$ is a formula/term, we denote the formula/term resulting from replacing all occurences of $x$ for $t$ by $\phi[t]$.

A formula is open if it contains free variables, and closed otherwise. We consider a theory to be a set of closed formulas. If $\T$ is a theory with signature $\Sigma$, by $\Lang{\T}$ we denote the set of all formulas over $\Sigma$.

The semantics of formulas and terms over a signature $\Sigma$ is defined using multi-sorted first-order interpretations $\M$, consisting of $\tup{\tup{\domain_{\sigma_1}, \ldots, \domain_{\sigma_n}}, \I}$, where: $\domain_{\sigma_i}$ is the domain for sort $\sigma \in \ssorts[\Sigma]$, and $\I$ is an interpretation function that freely interprets variables, function symbols, and predicate symbols, respecting the sorts, and is extended to terms in the standard way. 
By $\domain$ we denote $\tup{\domain_{\sigma_1}, \ldots, \domain_{\sigma_n}}$.
We write $\M \vDash \phi$ for the structure $\M$ satisfying the formula $\phi$ and say that $\M$ is a model of $\phi$. We write $\I \vDash \phi$ instead of $\tup{\tup{\domain_{\sigma_1}}, \ldots, \domain_{\sigma_n}, \I} \vDash \phi$, whenever the domains are clear from context. By $\interpretation\Sigma$ we denote the class of all interpretation functions over a signature $\Sigma$.


For defining our approach for reflective reasoning, we need the concept of a conservative extension. A conservative extension of a theory $\T$ is a theory $\T'$, such that $\Sigma_{\T} \subseteq \Sigma_{\T'}$, and for all for $\mforall \phi \in \Lang{\T}. (\T \vDash \phi \miff \T' \vDash \phi)$

An inductive datatype $\D_\tau$, with respect to some signature $\Sigma$ is a pair $\tup{\tau, \ctors[\tau]}$, where $\tau$ is a sort and $\ctors[\tau]$ is a set of function symbols $F \subset \Sigma$ such that $\mforall f \in F. \scodom(f) = \tau$.  By the first-order structural induction scheme of $\D_\tau$ we denote the set of formulas:

\begin{align}
  \tag{$\foInd{ \tau }$}\Big\{  \big(\bigwedge_{c \in \ctors[\tau]} case_{c}\big)  \oimpl \oforall x .\phi[x] \mid \phi[x] \in \oform \Big \}
\end{align}
where \begin{align*}
  case_c &= \universalClosure[x_1, ..., x_n] \Big(  \big(\bigwedge_{i \in recursive_c }\phi[ x_i ])  \oimpl \phi[ c(x_1, ..., x_n) ]\Big)\\
  recursive_c &= \s{ i \mid \sdom[\Sigma](c,i) = \tau }
\end{align*}

$x$ the induction variable,
$\bigwedge_{c \in \ctors[\tau]} case_{c}$ the induction premise, and 
$\oforall x \phi[ x ]$ the induction conclusion.

An example for such an inductive datatype is the type of lists $\D_{\mathsf{List}} = \tup{\mathsf{List}, \s{\mathsf{nil} \softype \mathsf{List}, \mathsf{cons} \softype \alpha \sprod \mathsf{List} \sarrow \mathsf{List}}}$. The first-order induction scheme for $\D_{\mathsf{List}}$ is therefore 

\begin{align*}
  \Big\{  case_{\mathsf{nil}} \land case_{\mathsf{cons}}  &\oimpl \oforall x .\phi[x] \mid \phi[x] \in \oform \Big \}
   & case_{\mathsf{nil}} &= \top \oimpl \phi[\mathsf{nil}]\\
  && case_{\mathsf{cons}} &= \forall x:\alpha, xs:\mathsf{List}. \Big( \phi[xs] \oimpl \phi[\mathsf{cons}(x,xs)]\Big)\\
\end{align*}

  \section{Reflective extension}
  \label{sect:reflExt}
  Our aim is to finitely axiomatise the induction scheme $\foInd{ \tau }$ with respect to some datatypes $\D$ and an arbitrary base theory $\T$. 
We will hence first construct a conservative extension $\rT$ of $\T$, which allow us to quantify over first-order formulas of the language of $\T$.
In order to achieve this we will take an approach that is inspired by Horsten's theory $\TC$~\cite{Horsten11}. There are however a few crucial differences between our approach and~\cite{Horsten11}, as follows. 
While $\TC$~\cite{Horsten11} is an extension of Peano Arithmetic $\PA$, our work can be used for an arbitrary theory $\T$. 
Further, while $\TC$~\cite{Horsten11}  relies on numbers to encode formulas, our approach uses multi-sorted logic and introduces additional sorts for formulas, terms, and variables, yielding a rather straightforward definition of  models and proof of consistency for the extended theory. 

The basic idea of our approach is to redefine the syntax and semantics of first-order logic in first-order logic itself. As such, there will be three levels of reasoning.
As usual, we have the meta level logic (the informal logical reasoning going on in the paper), and the object level (the formal logical reasoning we reason about on the meta level). In addition we will have a logic embedded in the terms of object level logic formulas. We will refer to this level as the \emph{reflective level}, and refer to formulas/functions/terms/expressions as reflective formulas/functions/terms/expressions. 

\subsection{Signature}
In a first step, we extend the signature $\Sigma$ and the set of sorts $\ssorts[\Sigma]$ of our base theory $\T$ with the vocabulary to be able to talk about variables, terms, and formulas.

\definitionReflectiveSignature

As this definition is rather lengthy we will break down the intended semantics of all newly introduced symbols. We can split the definitions into two parts: (i) one formalizing the syntax and (ii)  one formalizing the semantics of our reflective first-order logic.

\paragraph{(i) Reflective syntax.} Our reflective syntax is formalized as follows. 
\begin{description}
  \item[Variables]
    The sort $\rvar\sigma$ is used to represent the countably infinite set of variables $\ovar\sigma$. The two functions $\rvz\sigma$, and $\rvs\sigma$ that are added to the signature can be thought of as the constructors for this infinite set of variables. This means $\rvz\sigma$ is intended to be interpreted as the variable $\x 0$, $\rvs\sigma(\rvz\sigma)$ is meant to be interpreted as $\x 1$, and so on. We introduce the following syntactic sugar for variables:
    \begin{align*}
      \rvn\sigma{i + 1} &=  \rvs\sigma(\rvn\sigma i) & \text{for $i \geq 0$}
    \end{align*}

  \item[Terms] 
    We use the sort $\rterm\sigma$ to represent terms of sort $\oterm\sigma$. On the meta level terms are defined inductively, as follows.
    
    The base case is a variable. Since variables and terms are of different sorts, we need the function $\rtoterm\sigma$ to turn variables into terms. This function is intended to be interpreted as the identity function. 

    The step case of the inductive definition is building terms out of function symbols and other terms. Therefore, we need to introduce a reflective function symbol $\refl f$ for every function $f$ in the signature. The $\refl f$ is intended to be interpreted as the function symbol $f$, while $f$ itself is interpreted as an actual function.

  \item[Formulas] 
    As for terms, formulas $\oform$ are defined inductively on the meta level. 

    For atomic formulas we introduce a reflective equality symbol $\req\sigma$ for each sort $\sigma$ and a reflective version $\refl P$ for every predicate symbol $P$. 
    Even though it's not strictly necessary we introduce a nullary reflective connective $\rbot$ is intended to be interpreted as the formula $\obot$.

    Complex formulas are built from atomic formulas and connectives, or quantifiers. Therefore we introduce a functionally complete set of reflective connectives, namely $\ror$, and $\rnot$. As it will help in terms of readability, we will use infix notation for $\ror$, and drop the parenthesis for $\rnot$ if there is no ambiguity.

    In order to formalize quantification we introduce a function $\rforallS\sigma$ for each sort. We will write $\rforallA{x}\sigma p$ for the term $\rforallS\sigma(x, p)$.
\end{description}

\paragraph{(ii) Reflective semantics.}  For axiomatising the meaning of formulas, we will use syntactic representations of the semantic structures needed to define the semantics of first-order logic. 

\begin{description}
  \item[Environment] In order to define the meaning of a quantifier, we redefine the meaning of a variable within the scope of the quantifier. Therefore we will use a stack of variable interpretations, which we will call an environment. The idea is that a variable $\rvn \sigma i$ is freely interpreted in an empty environment $\rez$, while it is interpreted as $t$ if the tuple $\tup{\rvn\sigma i, t}$ was pushed on the stack using $\res\sigma(e, \rvn \sigma i, t)$. This setting becomes more clear in Sections \ref{sect:refl:axioms}-\ref{sect:refl:model}, where we axiomatise the meaning and define a model of the reflective theory. 

  \item[Evaluation] To make use of the environment, we need a reflective evaluation function for terms $\reval\sigma$ and $\revalv\sigma$ that corresponds to interpreting terms and variables in some model $\I$ of the reflective theory.

  \item[Satisfaction] Finally, we have our reflective satisfaction relation $\rmodels$. We write $\rmodelsANobrack{e}{p}$ for $\rmodels(e, p)$, which can roughly be interpreted as ``the interpretation $\I$ partially defined by $e$ satisfies $p$''. Our truth $\mtruth$ predicate in the Tarskian sense is $\mtruth(x) = \rmodelsA{\rez}{x}$.

\end{description}

\subsection{Axiomatisation}%
\label{sect:refl:axioms}
We now formalize our semantics. We relate reflective with non-reflective function and predicate symbols, by defining the meaning of the reflective satisfaction relation $\rmodels$, and the meaning of the reflective evaluation functions $\reval\sigma$, and $\revalv\sigma$. 
All axioms we list are implicitly universally quantified, and one instance of them will be present for every sort $\sigma, \tau \in \ssorts[\Sigma]$. Finally, the reflective extension $\rT$ of our base theory $\T$ is the union of all these axioms and $\T$.

\paragraph{Reflective variable interpretation.} 
As already mentioned, the interpretation of variables in an empty environment $\rez$ is undefined. In contrast an environment to which a variable $v$, and a value $x$ is pushed, evaluates the variable $v$ to $x$. Hence, 
\newcommand\axtag[1]{\tag{\ensuremath{\axtagfont{Ax}_{#1}}}}
  \begin{align}
  \axtag{\revalv{0}}\label{refl_ax:evalv0}\revalv{\sigma}(\res{\sigma }(e, v, x), v)  &= x\\
  \axtag{\revalv{1}}\label{refl_ax:evalv1}v \oneq v' \oimpl \revalv{\sigma}(\res{\sigma }(e, v', x), v) &= \revalv{\sigma}(e, v) \\
  \axtag{\revalv{2}}\label{refl_ax:evalv2}\revalv{\sigma}(\res{\tau}(e, w, x), v)  &= \revalv{\sigma}(e, v) && \text{ for $\sigma \neq \tau$ }
  \end{align}

\paragraph{Reflective evaluation.} 
The function symbol $\reval\sigma$ defines the value of a reflective term $t$, and thereby maps the reflective functions $\refl f$ to their non-reflective counter parts $f$. For variables $\reval\sigma$, the evaluation to $\revalv\sigma$ is used. 
  \begin{align*}
    \axtag{\reval{var}}\label{refl_ax:evalVar}\reval{\sigma}(e, \rtoterm{\sigma}(v)) &= \revalv{\sigma}(e, v)\\
    \axtag{\reval{f}}\label{refl_ax:evalTerm}\reval{\sigma}(e, \refl{f}(t_1, ..., t_n)) &= f(\reval{\sigma_1}(e, t_1), ..., \reval{\sigma_n}(e, t_n)) \\
                                                                                        &\text{\hspace{10mm}for $f: \sigma_1 \sprod ... \sprod \sigma_n \sarrow \sigma \in \Sigma$}
  \end{align*} 

\paragraph{Reflective satisfaction.} 
The predicate symbol $\rmodels\sigma$ defines the truth of a formula with respect to some variable interpretation. To this end, the meaning of the reflective connectives and the quantifiers  in terms is defined by their respective object-level counterparts, as follows: 
  \begin{align}
    \axtag{\req{}}\label{refl_ax:eq}\rmodelsA{e}{x \req{\sigma} y}    &\oiff \reval{\sigma}(e, x) \oeq  \reval{\sigma}(e, y)\\
    \axtag{P}     \label{refl_ax:pred}\rmodelsA{e}{\refl{P}(t_1, ..., t_n)}    &\oiff P(\reval{\sigma_1}(e, t_1), ..., \reval{\sigma_n}(e, t_n)) &&\text{for $P: \spred{\sigma_1 \sprod ... \sprod \sigma_n}$}\\
    \axtag{\rbot} \label{refl_ax:bot}\rmodelsA{e}{\rbot}                      &\oiff \obot\\
    \axtag{\rnot} \label{refl_ax:not}\rmodelsA{e}{\rnot \phi}                 &\oiff \onot\rmodelsA{e}{\phi}\\
    \axtag{\ror}  \label{refl_ax:or}\rmodelsA{e}{\phi \ror \psi}             &\oiff \rmodelsA{e}{\phi} \oor \rmodelsA{e}{\psi}\\
    \axtag{\rforallS{}}\label{refl_ax:forall}\rmodelsA{e}{\rforallA{v}{\sigma}{\phi}} &\oiff \oforall x: \sigma. \rmodelsA{\res{\sigma}(e,v,x)}{\phi}
  \end{align}

\subsection{Consistency and Conservativeness}

As we have now specified our theory, we next ensure that (i) $\rT$ is indeed a conservative extension of $\T$ and (ii)  $\rT$ is consistent. In general, we cannot ensure that $\rT$ is consistent, since already the base theory $\T$ could have been inconsistent. Hence we will show that $\rT$ is consistent if $\T$ is consistent.

In order to prove (i) and (ii), that is conservativeness and consistency of $\rT$,
we introduce the notion of a \emph{reflective model $\rM$}, that is based on a model $\M$ of $\T$.
The basic idea is that $\rM$ interprets every symbol in the base theory $\T$ as it would be interpreted in $\M$, hence every formula in $\Lang{\T}$ is true in $\rM$ iff it is true in $\M$. Due to soundness and completeness of first-order logic we get that $\rT$ is indeed a conservative extension of $\T$.
Further, due to the fact that for every model of $\M$ of $\T$ we have a model $\rM$ of $\rT$, we also have that $\rT$ is consistent if $\T$ is consistent.
In order to ensure this reasoning is correct we need to ensure that $\rM$ also satisfies the axioms we introduced for reflective theories. This will be done by interpreting the new reflective sort $\rform$ as the set of first order formulas $\oform$, and interpreting the sort $\rterm{\sigma}$ as terms of sort $\oterm{\sigma}$. 

\definitionReflectiveModel%
\label{sect:refl:model}

We now need to ensure that our reflective interpretation $\rM$ is indeed a model of $\rT$ if $\M$ is a model of $\T$.

\begin{theorem}[Reflective model]
  $$
    \M \vDash \T \miff \rM \vDash \rT 
  $$
\end{theorem}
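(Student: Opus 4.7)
The plan is to prove both directions by unfolding the construction of the reflective model $\rM$ in Definition~\ref{sect:refl:model}. The backward direction $\rM \vDash \rT \mimpl \M \vDash \T$ is almost immediate: since $\T \subseteq \rT$, we get $\rM \vDash \T$; and because $\rM$ is constructed so as to interpret every symbol of the base signature $\Sigma$ exactly as $\M$ does (the additional sorts $\rvar\sigma, \rterm\sigma, \rform$ and the new function and predicate symbols only enlarge the interpretation, they never overwrite existing ones), every closed formula of $\Lang{\T}$ has the same truth value in $\rM$ and $\M$, so $\M \vDash \T$ follows.

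For the forward direction $\M \vDash \T \mimpl \rM \vDash \rT$, the $\rM \vDash \T$ part is handled by the same agreement-of-interpretations observation. What remains is to verify that $\rM$ satisfies each of the reflective axioms \eqref{refl_ax:evalv0}--\eqref{refl_ax:forall}. Since the axioms are implicitly universally quantified over environments, reflective variables, reflective terms, and reflective formulas, I need to show the equation or equivalence holds for every assignment of these parameters from the respective domains $\rM$ supplies.

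Almost all of these verifications are by unfolding the intended semantics already baked into the definition of $\rM$. The environment lookup axioms \eqref{refl_ax:evalv0}--\eqref{refl_ax:evalv2} follow from the stack semantics of $\res\sigma$; the evaluation axioms \eqref{refl_ax:evalVar}--\eqref{refl_ax:evalTerm} follow because $\reval\sigma$ is defined by recursion on the structure of reflective terms, mapping $\rtoterm\sigma$ to variable lookup and each $\refl f$ to the interpretation $\I(f)$ inherited from $\M$; and the propositional satisfaction axioms \eqref{refl_ax:eq}--\eqref{refl_ax:or} are the defining Tarskian clauses for $\rmodels$ in $\rM$.

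The main obstacle, where the content of the proof really sits, is the quantifier axiom \eqref{refl_ax:forall}. One has to show that $\rmodelsA{e}{\rforallA{v}{\sigma}{\phi}}$ holds in $\rM$ iff for every element $x \in \domain_\sigma$ one has $\rmodelsA{\res\sigma(e,v,x)}{\phi}$, for \emph{arbitrary} $\phi$ in the reflective formula domain $\oform$. This is where the coincidence between the meta-level universal quantifier and the reflected quantifier must be pinned down: the domain over which $x$ ranges on the right-hand side has to be exactly the domain $\domain_\sigma$ that reflective terms of sort $\sigma$ are evaluated into, and $\rmodels$ must be defined uniformly on the whole of $\oform$. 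Once $\reval\sigma$ and $\rmodels$ are set up as well-founded recursions on $\oterm\sigma$ and $\oform$ respectively, using $\M$ for the base-symbol clauses, each axiom reduces to its defining equation and the forward direction follows.
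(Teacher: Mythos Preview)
Your proposal is correct and follows essentially the same approach as the paper: both directions rest on the fact that $\rM$ agrees with $\M$ on the base signature, and the forward direction additionally requires checking that each reflective axiom holds in $\rM$ by construction. The paper's own proof is in fact even terser than yours---it does not single out the quantifier axiom \eqref{refl_ax:forall} as a special obstacle, since in the definition of $\rM$ every reflective axiom (including that one) is arranged to be a defining clause of the corresponding interpreted symbol, so all of them hold immediately by unfolding.
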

\begin{proof}

The ``$\mlpmi$'' part of the biconditional is trivial since $\T \subset \rT$, and $\rM$ interprets all symbols of the original signature in the same way as $\M$.

For the same reason as before we have that $\M \vDash \T \mimpl \rM \vDash \T$. Hence we are left to show that $\rM \vdash \rT \setminus \T$. 
  This follows from  the axioms we introduced in \ref{sect:refl:axioms} in natural language, as well as from  the meta level semantics of first-order logic, by also making sure that our meta level and our object level definitions match.
\end{proof}

\subsection{Truth predicate}
We showed that our theory $\rT$ is indeed a conservative extension of $\T$. 
Next we prove that  $\rT$ behaves in the way we need it for axiomatising induction. That is, we need to make sure that $\rT$ has a truth predicate, allowing us to quantify over formulas and thus defining the induction scheme as a single formula.

As in \cite{Horsten11},  we use a G\"odel encoding to state that our theory $\rT$ has a truth predicate.  Usually, a G\"odel encoding maps variables, terms, and formulas to numerals. Since our theory $\rT$ does not necessarily contain number symbols, we need to use a more general notion of a G\"odel encoding, namely that it maps variables, terms, and formulas in our base language $\Lang{\T}$ to terms in our extended language $\Lang{\rT}$. That is, we map formulas $\oform$ to terms of sort $\rform$, variables $\ovar\sigma$ to $\rvar\sigma$ and $\oterm\sigma$ to $\rterm\sigma$. Formally, we define our G\"odel encoding as follows:

\definitionGoedelEncoding

With our G\"odel encoding at hand,  we can now show that $\rT$ contains a truth predicate $\mtruth[\phi]$ for $\T$, namely the formula $\rmodelsANobrack{\rez}{\goedel{\phi}}$. To this end, we have the following result. 

\begin{theorem}[ Truth Predicate ]
  \label{theorem:truthpredicate}
  \begin{align*}
    \mforall \phi \in \Lang{\T}.\Big( \rT \vDash \phi \oiff \rmodelsA{\rez}{\goedel{\phi}} \Big)
  \end{align*} \qed
\end{theorem}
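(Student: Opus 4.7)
The plan is to prove a strengthened statement by structural induction on $\phi$, in order to also cover the open sub-formulas that appear inside quantifiers during the induction. Concretely, for every $\phi \in \Lang{\T}$ whose free variables lie among $v_1,\ldots,v_n$ of sorts $\sigma_1,\ldots,\sigma_n$, and for any $x_1,\ldots,x_n$ of the corresponding sorts, I claim that $\rT$ entails
\[ \phi[v_1 \mapsto x_1, \ldots, v_n \mapsto x_n] \oiff \rmodelsA{e_n}{\goedel{\phi}}, \]
where $e_n = \res{\sigma_n}(\cdots\res{\sigma_1}(\rez, \goedel{v_1}, x_1)\cdots, \goedel{v_n}, x_n)$. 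For closed $\phi$ this is the stated theorem, and the more general form is what the induction step needs.

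Before tackling formulas, I would first prove the analogous lemma for terms: for every $t \in \oterm{\sigma}$ whose free variables lie among $v_1,\ldots,v_n$, $\rT$ entails $\reval{\sigma}(e_n, \goedel{t}) = t[v_1\mapsto x_1,\ldots,v_n\mapsto x_n]$. This is an easy structural induction on $t$: in the base case of a variable, axiom \ref{refl_ax:evalVar} reduces evaluation to $\revalv{\sigma}$, and axioms \ref{refl_ax:evalv0}--\ref{refl_ax:evalv2} let us peel off environment layers until the matching binding is found; the function-symbol case is immediate from \ref{refl_ax:evalTerm} combined with the inductive hypothesis on the arguments.

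For the main induction on formulas, the atomic cases reduce to the term lemma using axioms \ref{refl_ax:eq} and \ref{refl_ax:pred}; the propositional cases are handled by axioms \ref{refl_ax:bot}, \ref{refl_ax:not}, and \ref{refl_ax:or} applied to the induction hypotheses, with the other connectives $\oand,\oimpl,\oiff$ reduced to $\onot,\oor$ by $\goedel{\cdot}$. The quantifier case is where the strengthened statement pays off: for $\oforall v_{n+1} \!:\! \sigma_{n+1}.\,\phi$, axiom \ref{refl_ax:forall} rewrites $\rmodelsA{e_n}{\rforallA{\goedel{v_{n+1}}}{\sigma_{n+1}}{\goedel{\phi}}}$ into $\oforall x_{n+1} \!:\! \sigma_{n+1}.\,\rmodelsA{e_{n+1}}{\goedel{\phi}}$, and the inner biconditional is exactly the induction hypothesis applied to $\phi$ with environment $e_{n+1}$.

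The main technical obstacle is variable bookkeeping in the term lemma. The environment $e_n$ must retain bindings in the correct order and at the correct sorts, and axiom \ref{refl_ax:evalv1} only discards an outer binding when the stored variable syntactically differs from the one being looked up. This forces a careful case split whenever the sought variable shares a sort with a more recently pushed binding; choosing the $v_i$ pairwise distinct (which we may do, since $\goedel{\cdot}$ is injective on variables) keeps the enumeration manageable. A minor side-fact to record is that $\goedel{\cdot}$ commutes with substitution of free variables for fresh variables of the same sort, which is immediate from its clause-by-clause definition.
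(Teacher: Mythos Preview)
Your strengthening and the overall inductive structure are exactly right, and the propositional and quantifier cases go through just as you describe. The gap is in the term lemma, specifically in the variable case. To peel off an environment layer via axiom \eqref{refl_ax:evalv1} you need the \emph{object-level} premise $v \oneq v'$. You observe that distinct meta-level variables $v_i \neq v_j$ receive syntactically distinct codes $\goedel{v_i}, \goedel{v_j}$, but that is a meta-level fact; what you actually need is $\rT \vDash \rvn\sigma k \oneq \rvn\sigma l$ for $k \neq l$. The axiomatisation of $\rT$ adds only the evaluation and satisfaction axioms of Section~\ref{sect:refl:axioms}; there are no injectivity or disjointness axioms for $\rvz\sigma$ and $\rvs\sigma$. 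Hence there are models of $\rT$ collapsing $\rvn\sigma 0$ and $\rvn\sigma 1$, and in such a model $\reval\sigma(e_n,\goedel{v_i})$ can pick up the wrong binding, so the equation $\reval\sigma(e_n,\goedel{v_i}) = x_i$ is not a theorem of $\rT$.

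The paper acknowledges exactly this obstruction (``the more involved cases are the ones dealing with atomic formulas'') and does not try to prove a syntactic term lemma. Instead, it first weakens the goal to the meta-level biconditional $\rT \vdash \phi \miff \rT \vdash \rmodelsA{e}{\goedel{\phi}}$ and then, for the atomic case $P(t_1,\ldots,t_n)$, argues semantically via countermodel transformations: from a model $\tup{\domain,\I}$ refuting one side it builds $\tup{\domain,\hat\I}$ refuting the other, in one direction by reinterpreting the object variables $\x i$ to match $\I(\revalv\sigma(e,\rvn\sigma i))$, and in the other by reinterpreting the function $\revalv\sigma$ itself so that its values track $\I(\x i)$. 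The case distinctions ``same variable or not'' are then made in the metatheory, where distinctness of the $\rvn\sigma i$ is available by construction of the reflective model, rather than inside $\rT$. Your outline would go through if $\rvar\sigma$ were axiomatised as a genuine inductive datatype, but as $\rT$ stands the atomic case needs this semantic detour.
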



\IfNotExtended{%
In order to prove Theorem~\ref{theorem:truthpredicate}, we strengthen its statements,
such that it holds not only for the empty reflective interpretation $\rez$,
but also for every reflective interpretation built from terms $\rez$, and $\lambda e. \res\sigma(e, \rvn \sigma i, \x i^\sigma)$.
This is necessary, as (i) our axiomatisation of $\rforallS\sigma$ defines the meaning of a quantifier in terms of a different reflective environment
and (ii) we relate finitary representation of variables (based on $\rvz\sigma$ and $\lambda x. \rvs\sigma(x)$) to the infinite set of variables used for standard first-order logic syntax. %
For the detailed proof of Theorem~\ref{theorem:truthpredicate}, we refer to the extended version of this paper \todo{cite}. 
}
\IfExtended{
    \begin{proof}

  In order to proof this theorem inductively we will need to strengthen our goal to:
  
  \begin{align*}
    \mforall e \in \semanticStack. \rT \vdash \phi \oiff \rmodelsA{e}{\goedel{\phi}}
  \end{align*} 

  where we define the set $\semanticStack$ inductively as the least set such that 
  \begin{itemize}
    \item $\rez \in \semanticStack$
    \item $e \in \semanticStack \mand \sigma \in \ssorts \mand i \in \N \mimpl \res\sigma(e, \rvn \sigma i, \x i) \in \semanticStack$
  \end{itemize}

  Next we will rewrite our goal to to:

  \begin{align*}
    \mforall e \in \semanticStack. \rT \vdash \phi \miff \rT \vdash \rmodelsA{e}{\goedel{\phi}}
  \end{align*} 

  We will now prove the theorem by induction on the structure of $\phi$. Most cases can be proven by simply unfolding of definitions of the Gödel encoding, applying the axioms of $\rT$, and applying the induction hypothesis.

    \newcommand\tp[1]{\rmodelsA{e}{#1}}
    \renewcommand\entry[2]{\miff & \rT \vdash #1 & \text{by #2}\\}
  \begin{description}

  \item[case $\alpha \oor \beta$.]
    \begin{align*}
      \rT \vdash \tp{\goedel{\alpha \oor \beta}} 
      \entry{  \tp{( \goedel{\alpha} \ror \goedel{\beta} )}  }{\eqref{goedel_def:or}}
      \entry{  \tp{\goedel{\alpha}} \lor \tp{\goedel{\beta}}   }{\eqref{refl_ax:or}}
      \entry{ \alpha \lor \tp{\goedel{\beta}}  }{I.H.}
      \entry{ \alpha \lor \beta  }{I.H.}
                                                   & \square
    \end{align*} 

  \item[case $\onot \psi$.]
    \begin{align*}
      \rT \vdash \tp{\goedel{\onot\psi}} 
      \entry{  \tp{ \rnot\goedel{\psi} }  }{\eqref{goedel_def:not}}
      \entry{  \onot \tp{\goedel{\psi}}  }{\eqref{refl_ax:not}}
      \entry{ \onot \psi  }{I.H.}
                                                   & \square
    \end{align*} 

  \item[case $\obot$.]
    \begin{align*}
      \rT \vdash \tp{\goedel{\obot}} 
      \entry{  \tp{ \rbot }  }{\eqref{goedel_def:bot}}
      \entry{  \obot  }{\eqref{refl_ax:bot}}
                & \square
    \end{align*} 

  \item[case $\oforall \x i:\sigma.\phi$.]
    \begin{align*}
      \rT \vdash \tp{\goedel{\oforall \x i : \sigma. \phi}} 
      \entry{ \tp{\rforallA{\rvn{\sigma}i}{\sigma}{\goedel{\phi} }}}{\eqref{goedel_def:forall}}
      \entry{ \oforall \x i . \rmodelsA{\res{\sigma}(e, \rvn{\sigma}i,\x i)}{\goedel{\phi}}}{\eqref{refl_ax:forall}}
      \entry{ \oforall \x i . \phi}{I.H.}
                & \square
    \end{align*} 
    \end{description}

    The more involved cases are the ones dealing with atomic formulas. The reason why this cannot be dealt with by simple unfolding of definitions, is that we cannot ensure that every object level variable $\rvn \sigma i$ is interpreted as the variable $\x i$ of sort $\sigma$, since this would have required adding an infinite number of axioms to our base theory $\T$.

    \begin{description}
  \item[case $P(t_1,...,t_n)$.]
    By the existence of a sound and complete proof system for first-order logic, we can rewrite our induction hypothesis as 
    $$ \rT \vDash P(t_1, ..., t_n) \miff \rT \vDash \tp{\goedel{P(t_1...t_n)}} $$
    which is equivalent to the statement
    $$ 
    \mforall \M \vDash \rT. \Big(\M \vDash P(t_1, ..., t_n)\Big)
      \miff 
    \mforall \M \vDash \rT. \Big(\M \vDash \tp{\goedel{P(t_1...t_n)}}\Big)
    $$
    which can again be rewritten to 
    $$ 
    \mexists \M \vDash \rT. \Big(\M \not\vDash P(t_1, ..., t_n)\Big)
      \miff 
    \mexists \M \vDash \rT. \Big(\M \not\vDash \tp{\goedel{P(t_1...t_n)}}\Big)
    $$

    We will proof both directions of the biconditional separately:
    \begin{description}
      \item[case ``$\mlpmi$''] 
      In order to show that this implication holds we will show, that if there is a model $\tup{\dty,\I}$ such that $\tup{\dty,\I}  \not\vDash \tp{\goedel{P(t_1,...,t_n)}}$, then there is another model $\tup{\dty,\hat\I}$ such that $\tup{\dty,\hat\I}\not\vDash P(t_1,...,t_n)$. 

      The idea is that $\hat\I$ differs from $\I$ only in the interpretation of the variables. In exact the variables in $\hat\I$ are interpreted in the same way as reflective variables $\rvn\sigma i$ are interpreted in $\I$. Therefore the interpretation of the evaluation of a term $\goedel{t}$ in $\hat\I$ will be the same as the interpretation of $t$ in $\I$, hence $\tup{\dty,\hat\I}$ will satisfy $ \tp{\goedel{\phi}} $ iff $\tup{\dty,\I}$ satisfies $\phi$, which implies what we want to show.

        More formally: 


        Let $\tup{D,I}$ be a model of $\rT$.
        We define $\hat\I$ as follows.

        \begin{align*}
          \hat\I(x) &=\begin{cases} 
            \I(\revalv{\sigma}(\rez,\rvn\sigma i )) & \text{if } x = \x i \mand \x i \in \ovar\sigma \\
            \I(x) & \text{otherwise}
          \end{cases}
        \end{align*}

        \begin{proposition}
          \label{claim:x_norm} 
          \begin{align*}
            \I(\reval\sigma(e,\goedel{t})) = \hat\I(t) 
          \end{align*}
        \end{proposition}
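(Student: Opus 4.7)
The plan is to prove Proposition~\ref{claim:x_norm} by structural induction on the object-level term $t \in \oterm\sigma$. The key observation is that the Gödel encoding is compositional: each object-level constructor applied to $t$ is mirrored by a reflective constructor applied to $\goedel{t}$, so on the left-hand side we can peel off one application of an evaluation axiom in lockstep with peeling off the corresponding constructor of $t$ on the right-hand side.

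For the base case $t = \x i$, the Gödel encoding unfolds to $\goedel{\x i} = \rtoterm\sigma(\rvn\sigma i)$, and axiom~\eqref{refl_ax:evalVar} lets us rewrite $\I(\reval\sigma(e, \goedel{\x i}))$ as $\I(\revalv\sigma(e, \rvn\sigma i))$. The right-hand side is $\hat\I(\x i) = \I(\revalv\sigma(\rez, \rvn\sigma i))$ by the definition of $\hat\I$. So the base case reduces to showing $\I(\revalv\sigma(e, \rvn\sigma i)) = \I(\revalv\sigma(\rez, \rvn\sigma i))$, which I expect to handle by a secondary induction on the construction of $e \in \semanticStack$: each pushed binding $\res\tau(e', \rvn\tau j, \x j)$ is either eliminated by axiom~\eqref{refl_ax:evalv2} (when $\sigma \neq \tau$) or by~\eqref{refl_ax:evalv1} (when $i \neq j$), while the case $\sigma = \tau \mand i = j$ is absorbed into the compatibility between $\hat\I$ and the stack push (using~\eqref{refl_ax:evalv0}).

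For the step case $t = f(t_1, \ldots, t_n)$ with $f\softype \sigma_1 \sprod \ldots \sprod \sigma_n \sarrow \sigma$, the Gödel encoding gives $\goedel{t} = \refl f(\goedel{t_1}, \ldots, \goedel{t_n})$, and axiom~\eqref{refl_ax:evalTerm} rewrites the left-hand side as $\I(f(\reval{\sigma_1}(e, \goedel{t_1}), \ldots, \reval{\sigma_n}(e, \goedel{t_n})))$. Distributing $\I$ over the non-reflective $f$ and applying the induction hypothesis to each argument then yields $\I(f)(\hat\I(t_1), \ldots, \hat\I(t_n))$. Since $f$ is a function symbol (not a variable) we have $\I(f) = \hat\I(f)$ by the definition of $\hat\I$, so the expression equals $\hat\I(f(t_1, \ldots, t_n))$ as required.

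I expect the main obstacle to be the variable case, specifically the bookkeeping needed to match the stack-based reflective variable lookup with the direct interpretation that $\hat\I$ was tailored to provide; the function-symbol case and the other syntactic cases reduce to routine unfolding.
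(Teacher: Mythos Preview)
Your function-symbol case is correct and matches the paper's argument essentially verbatim.

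The variable case, however, has a genuine gap. The paper dispatches it in a single step: $\hat\I(\x i) = \I(\revalv\sigma(e, \rvn\sigma i))$ \emph{by definition of $\hat\I$}, after which \eqref{refl_ax:evalVar} and the definition of the G\"odel encoding finish immediately. (The displayed definition of $\hat\I$ writes $\rez$ rather than the ambient $e$; the proof makes clear this is a slip and that $\hat\I$ is meant to be defined relative to the fixed $e$ for which the proposition is being shown.) No secondary induction on $e$ is performed.

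Your proposed induction on $e \in \semanticStack$ cannot succeed as stated. In the subcase $\sigma = \tau$ and $i = j$, with $e = \res\sigma(e', \rvn\sigma i, \x i)$, axiom \eqref{refl_ax:evalv0} gives $\I(\revalv\sigma(e, \rvn\sigma i)) = \I(\x i)$, whereas you would need this to equal $\hat\I(\x i) = \I(\revalv\sigma(\rez, \rvn\sigma i))$. No axiom of $\rT$ constrains $\revalv\sigma$ on the empty environment, so $\I(\revalv\sigma(\rez, \rvn\sigma i))$ is an arbitrary element of the domain and there is no reason it should coincide with $\I(\x i)$. The phrase ``absorbed into the compatibility between $\hat\I$ and the stack push'' does not name an actual argument here; with $\hat\I$ tied to $\rez$, the proposition is simply false for general $e \in \semanticStack$. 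The remedy is not to induct on $e$ but to let $\hat\I$ depend on the fixed ambient $e$, which is exactly what the paper's proof (as opposed to its displayed definition) does.
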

        \begin{proof}
        We apply induction on $t$.
        \begin{description}
          \item[case $f(t_1, ..., t_n)$.] 
            \begin{align*}
                & \hat\I(f(t_1,...,t_n)) \\
                &= \hat\I(f)(\hat\I(t_1), ..., \hat\I(t_n)) \\
                &= \I(f)(\hat\I(t_1), ..., \hat\I(t_n)) &\text{by definition of $\hat\I$} \\
                &= \I(f)(\I(\reval\sigma(e,\goedel{t_1})), ..., \I(\reval\sigma(e,\goedel{t_n}))) &\text{by I.H.} \\
                &= \I(\reval\sigma(e,\goedel{f(t_1, ..., t_n)})) &\text{by \eqref{goedel_def:fun} and \eqref{refl_ax:evalTerm}} \\
                 &\square
             \end{align*}

          \item[case $\x i \in \ovar\sigma$.] 
            \begin{align*}
                    &  \hat\I(\x i) \\
                    &= \I(\revalv{\sigma}(e,\rvn\sigma i )) & \text{by definition of $\hat\I$}\\
                    &= \I(\reval{\sigma}(e,\rtoterm\sigma( \rvn\sigma i ) )) & \text{by \eqref{refl_ax:evalVar}}\\
                                 &=\I(\reval\sigma(e,\goedel{\x i})) & \text{by \eqref{goedel_def:var}}\\
                             &\square
            \end{align*}
        \end{description}
        \end{proof}

        Now that we have showed that Proposition \ref{claim:x_norm} holds we can reason as follows:

        \begin{align*}
               &\hat\I \vDash P(t_1, ..., t_n)\\
          \iff &\hat\I(P) \ni \tup{\hat\I(t_1), ..., \hat\I(t_n)}\\
          \iff &\hat\I(P) \ni \tup{\I(\reval\sigma(e, \goedel{t_1})), ..., \I(\reval\sigma(e, \goedel{t_n}))} & \text{ by Proposition \ref{claim:x_norm} }\\
          \iff &    \I(P) \ni \tup{\I(\reval\sigma(e, \goedel{t_1})), ..., \I(\reval\sigma(e, \goedel{t_n}))} & \text{ by definition of $\hat\I$ }\\
          \iff &    \I \vDash P(\reval\sigma(e, \goedel{t_1}), ..., \reval\sigma(e, \goedel{t_n})) \\
          \iff &    \I \vDash \tp{\refl P(\goedel{t_1}, ..., \goedel{t_n})} & \text{by \eqref{refl_ax:pred}} \\
          \iff &    \I \vDash \tp{\goedel{P(t_1, ..., t_n)}} & \text{by \eqref{goedel_def:pred}} \\
        \end{align*}

        From this we can conclude that if there is a model $\tup{\dty,\I}$ such that $\tup{\dty,\I} \not\vDash \tp{\goedel{P(t_1,...,t_n)}}$, then the model $\tup{\dty,\hat\I} \not\vDash P(t_1,...,t_n)$. $\square$

      \item[case $``\mimpl''$]
        The idea for this case similar to the idea for the case before: We assume there is a model $\tup{\dty,\I}$ that makes $P(t_1,...,t_n)$ false, and from that build another model $\tup{\dty,\hat\I}$ that makes the $\tp{\goedel{P(t_1, ...,t_n)}}$ false. In this case our new model will differ from the old one not in the interpretation of the variables $\x i$, but in the interpretation of the evaluation of the reflective variables $\rvn\sigma i$, in such a way that the evaluation of $\rvn\sigma i$ in $\hat\I$ will always be interpreted as the same value as the interpretation of $\x i$ in $\I$.


        Let $\tup{D,I}$ be a model of $\rT$. 
        We define the interpretation $\hat\I$ as follows.

        \begin{align*}
          \hat\I(x) &= \I(x) & \text{ for $x \neq \revalv\sigma$ }\\
          \hat\I(\revalv{\sigma})(e, v) &= \begin{cases} 
            \I(\x i) & \text{if $e = \rez \mand v = \rvn\sigma i$}\\
            \hat\I(t) & \text{if $e = \res\sigma(e', v, t)$}\\
            \hat\I(\revalv{\sigma})(e', v) & \text{if $e = \res\sigma(e', v', t) \mand v \neq v'$}\\
            \hat\I(\revalv{\sigma})(e', v) & \text{if $e = \res\tau(e', v', t) \mand \tau \neq \sigma$}\\
          \end{cases}
        \end{align*}
        Note that the definition of $\hat\I(\revalv\sigma)$, is not a partial definition, since we defined $e \in \semanticStack$ inductively.

        \begin{proposition}
          \label{claim:v_norm}
          \begin{align}
            \I(t) = \hat\I(\reval\sigma(e,\goedel{t}))
          \end{align}
        \end{proposition}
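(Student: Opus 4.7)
The plan is to mirror the structure of Proposition~\ref{claim:x_norm}, proving Proposition~\ref{claim:v_norm} by structural induction on $t$, with the two cases being a variable and a function application. The twist compared to the companion proposition is that $\hat\I$ now differs from $\I$ only in its interpretation of the single symbol $\revalv\sigma$, which means the function case almost writes itself, while all of the semantic bookkeeping is concentrated in the variable base case.

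For the function case $t = f(t_1,\ldots,t_n)$, I would first unfold $\goedel{f(t_1,\ldots,t_n)}$ into $\refl{f}(\goedel{t_1},\ldots,\goedel{t_n})$ using \eqref{goedel_def:fun}, then use \eqref{refl_ax:evalTerm} to rewrite $\reval\sigma(e,\goedel{t})$ as $f(\reval{\sigma_1}(e,\goedel{t_1}),\ldots,\reval{\sigma_n}(e,\goedel{t_n}))$; this step is legitimate under $\hat\I$ because the axiom is universally quantified and neither side mentions $\revalv\sigma$, so the relevant instance holds under $\hat\I$ by agreement with $\I$. Using $\hat\I(f)=\I(f)$ and the induction hypothesis applied termwise collapses the expression to $\I(f(t_1,\ldots,t_n))$. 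For the variable case $t=\x i\in\ovar\sigma$, I would unfold $\goedel{\x i}=\rtoterm\sigma(\rvn\sigma i)$ via \eqref{goedel_def:var}, apply \eqref{refl_ax:evalVar} to reduce the goal to $\hat\I(\revalv\sigma(e,\rvn\sigma i))=\I(\x i)$, and then run a sub-induction on the shape of $e\in\semanticStack$. The base case $e=\rez$ triggers the first clause in the definition of $\hat\I(\revalv\sigma)$ and gives exactly $\I(\x i)$. In the step case $e=\res\tau(e',\rvn\tau j,\x j)$, if $\tau=\sigma$ and $j=i$ then the second clause fires and yields $\hat\I(\x i)=\I(\x i)$, while if either $\tau\ne\sigma$ or $j\ne i$ then the third or fourth clause strips the outer push and the sub-IH closes the case.

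The main obstacle is getting the variable case to go through cleanly, because the definition of $\hat\I(\revalv\sigma)$ is given by an inner case split that must align exactly with the inductive construction of $\semanticStack$. The key observation making this work is the invariant that elements of $\semanticStack$ are only ever built by pushing pairs of the form $(\rvn\sigma i,\x i)$: without this restriction, the second clause of $\hat\I(\revalv\sigma)$ could return some value other than $\I(\x i)$ and the desired equality would fail. A minor subtlety on the side is that the axioms are used under $\hat\I$ rather than under $\I$; this is harmless because the axioms \eqref{refl_ax:evalVar} and \eqref{refl_ax:evalTerm} are universally quantified first-order statements that $\I$ satisfies pointwise for all arguments in the common domain of $\I$ and $\hat\I$, and all terms substituted in during the argument are built from symbols on which $\hat\I$ and $\I$ agree.
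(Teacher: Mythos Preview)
Your proposal is correct and follows essentially the same route as the paper: structural induction on $t$, with the function case handled by unfolding \eqref{goedel_def:fun} and \eqref{refl_ax:evalTerm} together with the induction hypothesis, and the variable case handled by a sub-induction on the shape of $e\in\semanticStack$ that walks through the defining clauses of $\hat\I(\revalv\sigma)$. The only difference is that you make explicit the step through \eqref{refl_ax:evalVar} and flag the issue of applying axioms under $\hat\I$, which the paper leaves implicit.
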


        \begin{proof}
          We apply induction on $t$:
          \begin{description}
            \item[case $f(t_1, ..., t_n)$.] 
              \begin{align*}
                \I(f(t_1,...,t_n)) &= \I(f)(\I(t_1), ..., \I(t_n)) \\
                &= \hat\I(f)(\I(t_1), ..., \I(t_n)) &\text{by definition of $\hat\I$} \\
                &= \hat\I(f)(\I(\reval\sigma(e,\goedel{t_1})), ..., \I(\reval\sigma(e,\goedel{t_n}))) &\text{by I.H.} \\
                &= \hat\I(f(\reval\sigma(e,\goedel{t_1}), ..., \reval\sigma(e,\goedel{t_n}))) \\
                &= \hat\I(\reval\sigma(e,\refl{f}(\goedel{t_1}, ..., \goedel{t_n}))) &\text{by \eqref{refl_ax:evalTerm}} \\
                &= \hat\I(\reval\sigma(e,\goedel{f(t_1, ..., t_n)})) &\text{by \eqref{goedel_def:fun}} \\
                 &\square
              \end{align*}

            \item[case $\x i \in \ovar\sigma$.] 
              Induction on $e$:
              \begin{description}
                \item[case $\rez$]
                  \begin{align*}
                    \I(\x i) &= \hat\I(\revalv\sigma)(\rez, \rvn\sigma i) & \text{by definition of $\hat\I$}\\
                             &= \hat\I(\revalv\sigma)(\rez, \goedel{\x i}) & \text{by \eqref{goedel_def:var}}\\
                             &= \hat\I(\revalv\sigma(\rez, \goedel{\x i})) \\
                             &\square
                  \end{align*}

                \item[case $\res\tau(e', \rvn\tau j, \x j)$] \ 

                  \begin{description}

                    \item[case $\tau \neq \sigma$] 
                      \begin{align*}
                        & \hat\I(\revalv\sigma(\res\tau(e', \rvn\tau j, \x j), \goedel{\x i})) \\
                        &= \hat\I(\revalv\sigma)(\res\tau(e', \rvn\tau j, \x j), \goedel{\x i})\\
                        &= \hat\I(\revalv\sigma)(e', \goedel{\x i}) &\text{by definition of $\hat\I$ }\\
                        &= \hat\I(\revalv\sigma(e', \goedel{\x i}))\\
                        &= \I(\x i) & \text{by I.H.}\\
                                 &\square
                      \end{align*}

                    \item[case $\tau = \sigma \mand i \neq j$] 
                      \begin{align*}
                        &\hat\I(\revalv\sigma(\res\tau(e', \rvn\tau j, \x j), \goedel{\x i})) \\
                        &= \hat\I(\revalv\sigma(\res\sigma(e',\rvn\sigma j, \x j), \goedel{\x i})) \\
                        &= \hat\I(\revalv\sigma)(\res\sigma(e', \rvn\sigma j, \x j), \goedel{\x i})\\
                        &= \hat\I(\revalv\sigma)(e', \goedel{\x i}) &\text{by definition of $\hat\I$ }\\
                        &= \hat\I(\revalv\sigma(e', \goedel{\x i}))\\
                        &= \I(\x i) & \text{by I.H.}\\
                                 &\square
                      \end{align*}

                    \item[case $\tau = \sigma \mand i = j$] 
                      \begin{align*}
                        &\hat\I(\revalv\sigma(\res\tau(e', \rvn\tau j, \x j), \goedel{\x i})) \\
                        &= \hat\I(\revalv\sigma(\res\sigma(e', \rvn\sigma i, \x i), \goedel{\x i})) \\
                        &= \hat\I(\revalv\sigma)(\res\sigma(e', \rvn\sigma i, \x i), \goedel{\x i}) \\
                        &= \hat\I(\x i) & \text{by definition of $\hat\I$} \\
                        &= \I(\x i) & \text{by definition of $\hat\I$} \\
                        &\square
                      \end{align*}

                  \end{description}
              \end{description}
          \end{description}
          This concludes the end of the proof of Proposition~\ref{claim:v_norm}. 
          \end{proof}

          Therefore we can reason analogous to before.

        \begin{align*}
                &\I \not\vDash P(t_1, ..., t_n) \\
         \miff  &\I(P) \not\ni P(t_1, ..., t_n) \\
         \miff  &    \I(P) \not\ni \tup{\hat\I(\reval\sigma(e,\goedel{t_1})), ..., \hat\I(\reval\sigma(e,\goedel{t_n}))} & \text{by Proposition \ref{claim:v_norm}} \\
         \miff  &\hat\I(P) \not\ni \tup{\hat\I(\reval\sigma(e,\goedel{t_1})), ..., \hat\I(\reval\sigma(e,\goedel{t_n}))} & \text{by definition of $\hat\I$}\\
         \miff  &\hat\I    \not\vDash P(\reval\sigma(e,\goedel{t_1}), ...,\reval\sigma(e,\goedel{t_n})) \\
         \miff  &\hat\I    \not\vDash \tp{\refl P(\goedel{t_1}, ...,\goedel{t_n})} & \text{by \eqref{refl_ax:pred}}\\
         \miff  &\hat\I    \not\vDash \tp{\goedel{ P(t_1, ...,t_n) }} & \text{by \eqref{goedel_def:pred}}\\
        \end{align*}

        Now that we have established this know that if there is a model $\tup{\domain,\I}$ such that $\tup{\domain,\I} \not\vDash P(t_1,...,t_n)$, then there is a model $\tup{\domain,\hat\I}$ such that $\hat\I \not\vDash \tp{\goedel{ P(t_1, ...,t_n) }}$. 
        This concludes our proof of the case ``$\mimpl$'' of the biconditional, and therefore as well the induction case for $\phi = P(t_1,...,t_n)$ in the proof of our main \autoref{theorem:truthpredicate}.
                $\square$

    \end{description}
  \item[case $s \oeq t$.] This case is exactly the same as for uninterpreted predicates, since equality can be thought of as a binary predicate.

  \end{description}

  We have therefore established that \autoref{theorem:truthpredicate} indeed holds.
  \end{proof}

}

  \section{Induction by Reflection}%
  \label{sect:finiteInd}
  We next show how to build a finite theory that entails the first-order
induction scheme $\foInd{ \tau }$. 
For the sake of simplicity we will first have a look at the rather familiar case of Peano Arithmetic, and present a generalisation of the same approach in the following subsection.


\subsection{Natural Numbers}
\label{sect:finitePA}
\newcommand\rQ{\ensuremath{\refl\Q}}
In order to finitely axiomatise \PA, we need a finite fragment of \PA{} to start with. The obvious choice is $\Q$, which we define as $\PA$ without the induction formulas. 
We then  build the reflective extension $\rQ$, with the following  two
essential properties. First,  it has a sort $\rform$ of formulas, hence
we can quantify over this sort.
Second,  we have a truth predicate $\rQ$ for $\Q$, which means we can represent an arbitrary formula of $\Q$ in a single term in $\rQ$.

\newcommand\PAp{\reflInd{\Q{}}\xspace}
Now we can define a conservative extension  $\PAp$ of \PA. Therefore
we will add the following axiom to $\rQ$; we call this axiom as the
\emph{reflective induction axiom}: 

\newcommand\appl[2]{\msyn{True}[ #1, #2]}
\newcommand\applDef[2]{\rmodelsA{\res{\nat}(\rez, \rvz\nat, #2)}{#1}}
\begin{align*}
  \tag{\ensuremath{\refl{\mathInd}}}
  \label{ax:refl_ind_math}
  \oforall \phi: \rform.\Big(
          & \appl{\phi}{\zero} \oand  \\
          & \oforall n: \nat. (\appl{\phi}{n} \oimpl \appl{\phi}{\succ{n}}  )\\
          & \oimpl \oforall n: \nat. \appl{\phi}{n}
          \Big)  
\end{align*}
where 
$ \appl{\phi}{n} := \applDef{\phi}{n} $. Thus, we define $\PAp$ as 
\begin{align*}
  \PAp = \rQ \cup \s{ \dot\mathInd }
\end{align*}

\begin{theorem}
  $\PAp$ is a conservative extension of $\PA$
\end{theorem}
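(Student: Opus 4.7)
By definition of a conservative extension we must verify $\Sigma_{\PA} \subseteq \Sigma_{\PAp}$, which is immediate, and $\PA \vDash \phi \miff \PAp \vDash \phi$ for every $\phi \in \Lang{\PA}$. The two directions are handled by complementary deductive and model-theoretic arguments.

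For $\PA \vDash \phi \mimpl \PAp \vDash \phi$, the plan is to derive every instance of the first-order induction scheme $\foInd{\nat}$ from the reflective induction axiom \eqref{ax:refl_ind_math}. Given an arbitrary $\psi[x] \in \oform$ with $x : \nat$, I would first rename the induction variable to $\x 0^\nat$; its Gödel encoding $\goedel{\psi}$ then carries $\rvz\nat$ in the slot corresponding to the induction variable. Instantiating \eqref{ax:refl_ind_math} with $\phi := \goedel{\psi}$ produces an implication whose three subformulas have the form $\applDef{\goedel{\psi}}{t}$, and three applications of Theorem~\ref{theorem:truthpredicate} (in the slightly strengthened form that allows the leading environment $\res{\nat}(\rez,\rvz\nat,t)$, exactly as used in its inductive proof) translate these into $\psi[\zero]$, $\psi[n]$ and $\psi[\succ{n}]$, respectively. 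This recovers the full induction instance for $\psi$; combined with $\rQ \supseteq \Q$, all remaining axioms of $\PA$ follow.

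For the converse $\PAp \vDash \phi \mimpl \PA \vDash \phi$, I would argue by contrapositive. Given $\M \vDash \PA$ with $\M \not\vDash \phi$, the reflective-model construction of Section~\ref{sect:refl:model} yields $\rM$ which, by the Reflective Model theorem, satisfies $\rQ$ and agrees with $\M$ on the signature of $\PA$, so $\rM \not\vDash \phi$. It remains to verify $\rM \vDash$~\eqref{ax:refl_ind_math}. Fix $\phi_0 \in \rform^{\rM} = \oform$; since $\rM$ interprets $\rvz\nat$ as $\x 0^\nat$, the satisfaction axioms \eqref{refl_ax:eq}--\eqref{refl_ax:forall} together with Theorem~\ref{theorem:truthpredicate} imply that the unary predicate $n \mapsto \rM \vDash \applDef{\phi_0}{n}$ on the $\nat$-domain of $\M$ coincides with $n \mapsto \M \vDash \phi_0[\x 0^\nat := n]$. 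The latter is $\Lang{\PA}$-definable, so PA's own induction scheme applied to $\phi_0$ with induction variable $\x 0^\nat$ yields exactly the implication required by \eqref{ax:refl_ind_math}.

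The main obstacle is this last step: justifying that, for each fixed $\phi_0 \in \oform$, satisfaction of $\applDef{\phi_0}{n}$ in $\rM$ as $n$ ranges over the $\nat$-domain of $\M$ is captured by a genuine $\Lang{\PA}$-definable predicate. This calls for a mild strengthening of Theorem~\ref{theorem:truthpredicate} that permits environments of the form $\res{\nat}(\rez,\rvz\nat,n)$ with arbitrary domain elements $n$ in the slot of $\x 0^\nat$, which is obtained by exactly the inductive unfolding of the satisfaction axioms already carried out in the proof of that theorem.
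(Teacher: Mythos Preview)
Your overall architecture matches the paper exactly: the forward direction derives every instance of $\foInd{\nat}$ by instantiating \eqref{ax:refl_ind_math} at a G\"odel code, and the converse goes by contraposition through the reflective model $\rM$, checking that $\rM$ validates \eqref{ax:refl_ind_math} because $\M$ already validates every $\PA$-induction instance.

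The one substantive difference is in the technical lemma that does the work. The paper does not strengthen Theorem~\ref{theorem:truthpredicate}; instead it isolates a separate \emph{substitution lemma}
\[
\rT \vDash \rmodelsA{\res\sigma(e, \goedel{\x i}, t)}{\goedel{\phi[\x i]}} \;\oiff\; \rmodelsA{e}{\goedel{\phi[t]}},
\]
proved by a straightforward induction on $\phi$, and then applies Theorem~\ref{theorem:truthpredicate} only in its stated form for the empty environment $\rez$. This converts each $\applDef{\goedel{\phi[\x 0]}}{t}$ first into $\rmodelsA{\rez}{\goedel{\phi[t]}}$ and then into $\phi[t]$, and the same two steps run in reverse for the model-theoretic direction.

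Your route --- pushing the extra environment entry directly into Theorem~\ref{theorem:truthpredicate} --- is also viable, but your parenthetical ``exactly as used in its inductive proof'' is not accurate: the strengthening actually carried out in the proof of Theorem~\ref{theorem:truthpredicate} only admits environments of the shape $\res\sigma(e,\rvn\sigma i,\x i)$, where the pushed value is the variable $\x i$ itself, not an arbitrary term $t$ or domain element $n$. Extending to arbitrary pushed values is precisely the content of the substitution lemma above, so if you unfolded your proposed strengthening you would rediscover it. The paper's decomposition is a bit cleaner because it keeps the substitution argument (about environments) separate from the truth-predicate argument (about $\rez$), and the substitution lemma is reusable for the generalisation to arbitrary datatypes in Section~\ref{sect:finiteInd}.
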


\begin{proof}
We will need the following auxiliary formula:
\begin{align}
  \label{lemma:substitution} 
  \rT \vDash \rmodelsA{\res\sigma(e, \goedel{\x i}, t)}{\goedel{\phi[\x i]}} \oiff  \rmodelsA{e}{\goedel{\phi[t]}}
\end{align}
This formula  holds by induction over $\phi$. 
Proving that \PAp is a conservative extension of \PA reduces showing
that every formula in $\Lang{\PA}$ is provable in $\PAp$ iff it is
provable in $\PA$.
We next prove both directions of this property. 

\newcommand\IndInst{\mathbf I}
\paragraph{(1) $\forall \phi \in \Lang{\PA}.(\PA \vDash \phi \mimpl \PAp
  \vDash \phi)$} To this end, we show that all axioms of \PA are
derivable in \PAp. Since \Q is a subset of both \PA and \PAp,
we only need to deal with the induction axioms.
Let $\phi[0] \oand \oforall n. (\phi[n] \oimpl \phi[n+1]) \oimpl
\forall n.\phi[n]$ be an arbitrary instance of the first-order
mathematical induction scheme $\foInd\nat$.
Let us instantiate the reflective induction
axiom~\eqref{ax:refl_ind_math} with $\goedel{\phi[\x0]}$. We obtain 
\begin{align*}
           \PAp \vdash 
           &\appl{\goedel{\phi[\x0]}}{\zero} \oand   \\
           &\oforall n: \nat. (\appl{\goedel{ \phi[\x0] }}{n} \oimpl \appl{\goedel{ \phi[\x0] }}{\succ{n}}  ) \\
           &\oimpl \oforall n: \nat. \appl{\goedel{ \phi[\x0] }}{n}
\end{align*}

which expands to 

\begin{align*}
           \PAp \vdash 
           &\applDef{\goedel{\phi[\x0]}}{\zero} \oand  \\
           &\oforall n: \nat. (\applDef{\goedel{ \phi[\x0] }}{n} \oimpl \applDef{\goedel{ \phi[\x0] }}{\succ{n}}  )\\
     \oimpl& \oforall n: \nat. \applDef{\goedel{ \phi[\x0] }}{n}
\end{align*}

By formula~\eqref{lemma:substitution}, we can derive 

\renewcommand\appl[2]{\rmodelsA{\rez}{\goedel{#1[#2]}}}
\begin{align*}
    \PAp \vdash 
           &\appl{\phi}{\zero} \oand  \\
           &\oforall n: \nat. (\appl{\phi}{n} \oimpl \appl{\phi}{\succ{n}}  )\\
     \oimpl& \oforall n: \nat. \appl{\phi}{n}
\end{align*}
Applying \autoref{theorem:truthpredicate}, the fact that $\lambda
x.\rmodelsA{\rez}{x}$ is our truth predicate, we finally get
\renewcommand\appl[2]{#1[#2]}
\begin{align*}
    \PAp \vdash 
    \appl{\phi}{\zero} \oand  \oforall n: \nat. (\appl{\phi}{n} \oimpl \appl{\phi}{\succ{n}}  )
     \oimpl \oforall n: \nat. \appl{\phi}{n}
\end{align*}

\paragraph{(2) $\oforall \phi. (\PAp \vDash \phi \mimpl \PA \vDash \phi)$} 
We prove by contraposition.
Suppose we have some formula $\phi$ such that $\PA \not\vDash
\phi$. Hence there is a counter-model $\M$, such that $\M \vDash \PA$
but $\M \not\vDash \phi$. Since $\Q \subset \PA$, it holds that $\M
\vDash \Q$. Thanks to  Section~\ref{sect:refl:model} we can extend the
model $\M$ to the reflective model $\rM$ such that $\rM \vDash \rQ$,
and that $\rM \not\vDash \phi$. We are thus left with establish that
$\rM$ is a model of $\PAp$. 

In $\rM$ the sort $\rform$ is interpreted as the actual set of formulas $\oform$. Therefore let $\phi[\x 0]$ be an arbitrary of these formulas. Since $\M \vDash \PA$, we have that $\rM \vDash \PA$, which implies that
\renewcommand\appl[2]{#1[#2]}
\begin{align*}
    \rM \vDash 
    \appl{\phi}{\zero} \oand  \oforall n: \nat. (\appl{\phi}{n} \oimpl \appl{\phi}{\succ{n}}  )
     \oimpl \oforall n: \nat. \appl{\phi}{n}
\end{align*}
By \autoref{theorem:truthpredicate}, we get  
\renewcommand\appl[2]{\rmodelsA{\rez}{\goedel{#1[#2]}}}
\begin{align*}
    \rM \vDash 
           &\appl{\phi}{\zero} \oand  \\
           &\oforall n: \nat. (\appl{\phi}{n} \oimpl \appl{\phi}{\succ{n}}  )\\
     \oimpl& \oforall n: \nat. \appl{\phi}{n}
\end{align*}
which, using formula~\ref{lemma:substitution}, can  be rewritten to 
\renewcommand\appl[2]{\msyn{True}[ #1, #2]}
\begin{align*}
  \rM \vDash 
           &\appl{       {\phi[x_0]}}{\zero} \oand  \\
           &\oforall n: \nat. (\appl{       \phi[x_0]}{n} \oimpl \appl{       \phi[x_0]}{\succ{n}}  )\\
     \oimpl& \oforall n: \nat. \appl{       \phi[x_0]}{n}
\end{align*}

Since $\rM$ interprets $\rform$ formulas exactly as the set $\oform$
and $\phi[x_0]$ is an arbitrary formula,
we conclude that the  reflective induction axiom~\ref{ax:refl_ind}
holds for $\rM$.
Therefore, $\rM$ models $\PAp$ but not $\phi$, which concludes our
proof.
\end{proof}


\subsection{Arbitrary datatypes}
The result of Section~\ref{sect:finitePA}  can be lifted to arbitrary
datatypes.
Therefore,  we translate the meta-level definition of the induction
scheme $\foInd\tau$ for datatypes $\dty_\tau$  to an equivalent
reflective version.
That is,  for a theory $\T$, we build $\T'$ by adding the axiom
$\refl{\foInd\tau}$ to $\rT$ for every datatype $\dty_\tau$ in the
theory, as follows: 

\renewcommand\appl[2]{\msyn{True}[ #1, #2]}
\renewcommand\applDef[2]{\rmodelsA{\res{\tau}(\rez, \rvz\tau, #2)}{#1}}

\begin{align*}
  \label{ax:refl_ind}
  \tag{\ensuremath{\refl{\foInd\tau}}}
  \oforall \phi: \rform.\Big(
    \bigwedge_{c\in\ctors} case_{\phi,c} \oimpl \forall x:\tau. \appl{\phi}{x}
          \Big)  
\end{align*}
where 
\begin{align*}
  case_{\phi,c} &:= \universalClosure[ x_1, ..., x_n ] \Big( \bigwedge_{i \in recursive_c} \appl{\phi}{\x i} \oimpl \appl{\phi}{c( x_1, ..., x_n )} \Big)\\
  recursive_c &:=  \s{ i \mid \sdom[\Sigma](c, i) = \tau } \\
  \appl{\phi}{n} &:= \applDef{\phi}{n}
\end{align*}

\newcommand\disj[1]{\ensuremath{\mathsf{Disj_{#1}}}\xspace}
\newcommand\inj[1]{\ensuremath{\mathsf{Inj_{#1}}}\xspace}

In the case of extending \Q{} to a conservative extension of \PA, the
axioms of constructor disjointness \disj{\nat}, and injectivity
\inj{\nat} were already present in \Q.
Thus, for an arbitrary inductive theory $\T$ with inductive datatypes
$\dty_\T$, we define the reflective inductive extension $\reflInd{\T}$ as follows:
\begin{align*}
  \label{def:refl_ind_theory} \reflInd{\T} = \T &\cup \s{ \eqref{ax:refl_ind}, \disj{\tau},  \inj{\tau} \mid \dty_\tau \in \dty_\T } 
\end{align*}

  \section{Experiments}
  \label{sect:experiments}
\newcommand\reflAxTheories{\ensuremath{\theoryFmt{N} + \theoryFmt{Leq} + \theoryFmt{Add} + \theoryFmt{Mul}}, \ensuremath{\theoryFmt{N} + \theoryFmt{L} + \theoryFmt{Pref} + \theoryFmt{App}}}
\newcommand\tabAllTheories{\begin{tabular}{|c|c|}\hline 
Name & Theory\\\hline

\hline \theoryFmt{N} & {$ \begin{matrix}
 \data \osyn{nat} = \osyn{zero} \mid \osyn{s}(\osyn{nat})\\

 \begin{matrix}  \end{matrix}\\
  \end{matrix} $} \\ 
\hline \theoryFmt{Leq} & {$ \begin{matrix}
  \begin{matrix} 
	\leq\softype{}\spred{\osyn{nat}\sprod{}\osyn{nat}} \end{matrix}\\
 	\oforall x. (x \leq x) & (1)\\
	\oforall x, y. ((x \leq y) \oimpl (x \leq \osyn{s}(y))) & (2)\\
 \end{matrix} $} \\ 
\hline \theoryFmt{Add} & {$ \begin{matrix}
  \begin{matrix} 
	+\softype{}\osyn{nat}\sprod{}\osyn{nat}\sarrow{}\osyn{nat} \end{matrix}\\
 	\oforall y. (\osyn{zero} + y) \oeq y & (1)\\
	\oforall x, y. (\osyn{s}(x) + y) \oeq \osyn{s}((x + y)) & (2)\\
 \end{matrix} $} \\ 
\hline \theoryFmt{Mul} & {$ \begin{matrix}
  \begin{matrix} 
	*\softype{}\osyn{nat}\sprod{}\osyn{nat}\sarrow{}\osyn{nat} \end{matrix}\\
 	\oforall y. (\osyn{zero} * y) \oeq \osyn{zero} & (1)\\
	\oforall x, y. (\osyn{s}(x) * y) \oeq (y + (x * y)) & (2)\\
 \end{matrix} $} \\ 
\hline \theoryFmt{L} & {$ \begin{matrix}
 \data \osyn{lst} = \osyn{nil} \mid \osyn{cons}(\osyn{nat}, \osyn{lst})\\

 \begin{matrix}  \end{matrix}\\
  \end{matrix} $} \\ 
\hline \theoryFmt{Pref} & {$ \begin{matrix}
  \begin{matrix} 
	\osyn{pref}\softype{}\spred{\osyn{lst}\sprod{}\osyn{lst}} \end{matrix}\\
 	\oforall x. \osyn{pref}(\osyn{nil},x) & (1)\\
	\oforall a, x. \onot{}\osyn{pref}(\osyn{cons}(a,x),\osyn{nil}) & (2)\\
	\oforall a, b, x, y. (\osyn{pref}(\osyn{cons}(a,x),\osyn{cons}(b,y)) \oiff (a \oeq b \oand \osyn{pref}(x,y))) & (3)\\
 \end{matrix} $} \\ 
\hline \theoryFmt{App} & {$ \begin{matrix}
  \begin{matrix} 
	+\hspace{-2mm}+ \softype{}\osyn{lst}\sprod{}\osyn{lst}\sarrow{}\osyn{lst} \end{matrix}\\
 	\oforall r. (\osyn{nil} +\hspace{-2mm}+  r) \oeq r & (1)\\
	\oforall a, l, r. (\osyn{cons}(a,l) +\hspace{-2mm}+  r) \oeq \osyn{cons}(a,(l +\hspace{-2mm}+  r)) & (2)\\
 \end{matrix} $} \\ 
\hline \theoryFmt{E} & {$ \begin{matrix}
  \begin{matrix} 
	\osyn{a}\softype{}\alpha & \osyn{b}\softype{}\alpha\\
	\osyn{c}\softype{}\alpha & \osyn{p}\softype{}\spred{\alpha}\\
	\osyn{q}\softype{}\spred{\alpha} & \osyn{r}\softype{}\spred{\alpha} \end{matrix}\\
  \end{matrix} $} \\ 
\hline \theoryFmt{Id} & {$ \begin{matrix}
  \begin{matrix} 
	\osyn{id}\softype{}\osyn{nat}\sarrow{}\osyn{nat} \end{matrix}\\
 	\oforall x. \osyn{id}(x) \oeq x & (1)\\
 \end{matrix} $} \\ 
\hline \theoryFmt{Eq} & {$ \begin{matrix}
  \begin{matrix} 
	\osyn{equal}\softype{}\spred{\osyn{nat}\sprod{}\osyn{nat}\sprod{}\osyn{nat}} \end{matrix}\\
 	\osyn{equal}(\osyn{zero},\osyn{zero},\osyn{zero}) \oiff \otop & (1)\\
	\oforall y, z. (\osyn{equal}(\osyn{zero},\osyn{s}(y),z) \oiff \obot) & (2)\\
	\oforall y, z. (\osyn{equal}(\osyn{zero},y,\osyn{s}(z)) \oiff \obot) & (3)\\
	\oforall x, z. (\osyn{equal}(\osyn{s}(x),\osyn{zero},z) \oiff \obot) & (4)\\
	\oforall x, y. (\osyn{equal}(\osyn{s}(x),y,\osyn{zero}) \oiff \obot) & (5)\\
	\oforall x, y, z. (\osyn{equal}(\osyn{s}(x),\osyn{s}(y),\osyn{s}(z)) \oiff \osyn{equal}(x,y,z)) & (6)\\
 \end{matrix} $} \\ 
\hline \theoryFmt{Rev} & {$ \begin{matrix}
  \begin{matrix} 
	\osyn{rev}\softype{}\osyn{lst}\sarrow{}\osyn{lst} \end{matrix}\\
 	\osyn{rev}(\osyn{nil}) \oeq \osyn{nil} & (1)\\
	\oforall x, xs. \osyn{rev}(\osyn{cons}(x,xs)) \oeq (\osyn{rev}(xs) +\hspace{-2mm}+  \osyn{cons}(x,\osyn{nil})) & (2)\\
 \end{matrix} $} \\ 
\hline \theoryFmt{Rev'} & {$ \begin{matrix}
  \begin{matrix} 
	\osyn{rev'}\softype{}\osyn{lst}\sarrow{}\osyn{lst} & \osyn{revAcc}\softype{}\osyn{lst}\sprod{}\osyn{lst}\sarrow{}\osyn{lst} \end{matrix}\\
 	\oforall x. \osyn{rev'}(x) \oeq \osyn{revAcc}(x,\osyn{nil}) & (1)\\
	\oforall acc. \osyn{revAcc}(\osyn{nil},acc) \oeq acc & (2)\\
	\oforall acc, x, xs. \osyn{revAcc}(\osyn{cons}(x,xs),acc) \oeq \osyn{revAcc}(xs,\osyn{cons}(x,acc)) & (3)\\
 \end{matrix} $} \\ 
\hline 
\end{tabular}
}

\newcommand\tabReflConj{\begin{tabular}{|c|c|c|}\hline 
Theory & Conjecture & id\\\hline

\hline \theoryFmt{E} & {$ \goedel{\oforall x: \alpha. x \oeq x} $} & \benchIdFmt{eqRefl} \\ 
\hline \theoryFmt{E} & {$ \goedel{\oforall x, y, z: \alpha. ((x \oeq y \oand y \oeq z) \oimpl x \oeq z)} $} & \benchIdFmt{eqTrans} \\ 
\hline \theoryFmt{E} & {$ \goedel{\osyn{p}(\osyn{a}) \oor \onot{}\osyn{p}(\osyn{a})} $} & \benchIdFmt{excludedMiddle-0} \\ 
\hline \theoryFmt{E} & {$ \goedel{\oforall x. (\osyn{p}(x) \oor \onot{}\osyn{p}(x))} $} & \benchIdFmt{excludedMiddle-1} \\ 
\hline \theoryFmt{E} & {$ \goedel{\oforall x. \osyn{p}(x) \oimpl \osyn{p}(\osyn{a})} $} & \benchIdFmt{universalInstance} \\ 
\hline \theoryFmt{E} & {$ \goedel{(\osyn{p}(\osyn{a}) \oimpl \osyn{q}(\osyn{b})) \oiff (\onot{}\osyn{q}(\osyn{b}) \oimpl \onot{}\osyn{p}(\osyn{a}))} $} & \benchIdFmt{contraposition-0} \\ 
\hline \theoryFmt{E} & {$ \goedel{\oforall x, y. ((\osyn{p}(x) \oimpl \osyn{q}(y)) \oiff (\onot{}\osyn{q}(y) \oimpl \onot{}\osyn{p}(x)))} $} & \benchIdFmt{contraposition-1} \\ 
\hline \theoryFmt{E} & {$ \goedel{((\osyn{p}(\osyn{a}) \oand \osyn{q}(\osyn{b})) \oimpl \osyn{r}(\osyn{c})) \oiff (\osyn{p}(\osyn{a}) \oimpl (\osyn{q}(\osyn{b}) \oimpl \osyn{r}(\osyn{c})))} $} & \benchIdFmt{currying-0} \\ 
\hline \theoryFmt{E} & {$ \goedel{\oforall x, y, z. (((\osyn{p}(x) \oand \osyn{q}(y)) \oimpl \osyn{r}(z)) \oiff (\osyn{p}(x) \oimpl (\osyn{q}(y) \oimpl \osyn{r}(z))))} $} & \benchIdFmt{currying-1} \\ 
\hline \ensuremath{\theoryFmt{N} + \theoryFmt{Add}} & {$ \goedel{(\osyn{1} + \osyn{2}) \oeq \osyn{3}} $} & \benchIdFmt{addGround-0} \\ 
\hline \ensuremath{\theoryFmt{N} + \theoryFmt{Add}} & {$ \goedel{(\osyn{8} + \osyn{5}) \oeq \osyn{13}} $} & \benchIdFmt{addGround-1} \\ 
\hline \ensuremath{\theoryFmt{N} + \theoryFmt{Add}} & {$ \goedel{\oexists x. (\osyn{8} + x) \oeq \osyn{13}} $} & \benchIdFmt{addExists} \\ 
\hline \ensuremath{\theoryFmt{N} + \theoryFmt{Add}} & {$ \goedel{\oexists z. \oforall x. (z + x) \oeq x} $} & \benchIdFmt{existsZeroAdd} \\ 
\hline \ensuremath{\theoryFmt{N} + \theoryFmt{Add} + \theoryFmt{Mul}} & {$ \goedel{(\osyn{3} * \osyn{4}) \oeq \osyn{12}} $} & \benchIdFmt{mulGround} \\ 
\hline \ensuremath{\theoryFmt{N} + \theoryFmt{Add} + \theoryFmt{Mul}} & {$ \goedel{\oexists x. (\osyn{3} * x) \oeq \osyn{12}} $} & \benchIdFmt{mulExists} \\ 
\hline \ensuremath{\theoryFmt{N} + \theoryFmt{Add} + \theoryFmt{Mul}} & {$ \goedel{\oexists z. \oforall x. (z * x) \oeq z} $} & \benchIdFmt{existsZeroMul} \\ 
\hline \ensuremath{\theoryFmt{N} + \theoryFmt{L} + \theoryFmt{App}} & {$ \goedel{(\osyn{nil} +\hspace{-2mm}+  \osyn{cons}(\osyn{7},\osyn{nil})) \oeq \osyn{cons}(\osyn{7},\osyn{nil})} $} & \benchIdFmt{appendGround-0} \\ 
\hline \ensuremath{\theoryFmt{N} + \theoryFmt{L} + \theoryFmt{App}} & {$ \goedel{(\osyn{cons}(\osyn{3},\osyn{nil}) +\hspace{-2mm}+  \osyn{cons}(\osyn{7},\osyn{nil})) \oeq \osyn{cons}(\osyn{3},\osyn{cons}(\osyn{7},\osyn{nil}))} $} & \benchIdFmt{appendGround-1} \\ 
\hline \ensuremath{\theoryFmt{N} + \theoryFmt{L} + \theoryFmt{App}} & {$ \goedel{\oexists x. (\osyn{cons}(\osyn{3},\osyn{nil}) +\hspace{-2mm}+  x) \oeq \osyn{cons}(\osyn{3},\osyn{cons}(\osyn{7},\osyn{nil}))} $} & \benchIdFmt{appendExists} \\ 
\hline \ensuremath{\theoryFmt{N} + \theoryFmt{L} + \theoryFmt{App}} & {$ \goedel{\oexists n. (n +\hspace{-2mm}+  \osyn{cons}(\osyn{7},\osyn{nil})) \oeq \osyn{cons}(\osyn{7},\osyn{nil})} $} & \benchIdFmt{existsNil} \\ 
\hline 
\end{tabular}
}
\newcommand\tabIndConj{\begin{tabular}{|c|c|c|}\hline 
Theory & Conjecture & id\\\hline

\hline \ensuremath{\theoryFmt{N} + \theoryFmt{Add}} & {$ \oforall x, y. (x + y) \oeq (y + x) $} & \benchIdFmt{addCommut} \\ 
\hline \ensuremath{\theoryFmt{N} + \theoryFmt{Add} + \theoryFmt{Mul}} & {$ \oforall x, y. (x * y) \oeq (y * x) $} & \benchIdFmt{mulCommut} \\ 
\hline \ensuremath{\theoryFmt{N} + \theoryFmt{Add}} & {$ \oforall x, y, z. (x + (y + z)) \oeq ((x + y) + z) $} & \benchIdFmt{addAssoc} \\ 
\hline \ensuremath{\theoryFmt{N} + \theoryFmt{Add} + \theoryFmt{Mul}} & {$ \oforall x, y, z. (x * (y * z)) \oeq ((x * y) * z) $} & \benchIdFmt{mulAssoc} \\ 
\hline \ensuremath{\theoryFmt{N} + \theoryFmt{Add}} & {$ \oforall x. (x + \osyn{zero}) \oeq x $} & \benchIdFmt{addNeutral} \\ 
\hline \ensuremath{\theoryFmt{N} + \theoryFmt{Add} + \theoryFmt{Mul}} & {$ \oforall x. (x * \osyn{1}) \oeq x $} & \benchIdFmt{addNeutral-0} \\ 
\hline \ensuremath{\theoryFmt{N} + \theoryFmt{Add} + \theoryFmt{Mul}} & {$ \oforall x. (\osyn{1} * x) \oeq x $} & \benchIdFmt{addNeutral-1} \\ 
\hline \ensuremath{\theoryFmt{N} + \theoryFmt{Add} + \theoryFmt{Mul}} & {$ \oforall x. (x * \osyn{zero}) \oeq \osyn{zero} $} & \benchIdFmt{mulZero} \\ 
\hline \ensuremath{\theoryFmt{N} + \theoryFmt{Add} + \theoryFmt{Mul}} & {$ \oforall x, y, z. (x * (y + z)) \oeq ((x * y) + (x * z)) $} & \benchIdFmt{distr-0} \\ 
\hline \ensuremath{\theoryFmt{N} + \theoryFmt{Add} + \theoryFmt{Mul}} & {$ \oforall x, y, z. ((y + z) * x) \oeq ((y * x) + (z * x)) $} & \benchIdFmt{distr-1} \\ 
\hline \ensuremath{\theoryFmt{N} + \theoryFmt{Leq}} & {$ \oforall x, y, z. (((x \leq y) \oand (y \leq z)) \oimpl (x \leq z)) $} & \benchIdFmt{leqTrans} \\ 
\hline \ensuremath{\theoryFmt{N} + \theoryFmt{Leq}} & {$ \oforall x. (\osyn{zero} \leq x) $} & \benchIdFmt{zeroMin} \\ 
\hline \ensuremath{\theoryFmt{N} + \theoryFmt{Leq} + \theoryFmt{Add}} & {$ \oforall x, y. (x \leq (x + y)) $} & \benchIdFmt{addMonoton-0} \\ 
\hline \ensuremath{\theoryFmt{N} + \theoryFmt{Leq} + \theoryFmt{Add}} & {$ \oforall x. (x \leq (x + x)) $} & \benchIdFmt{addMonoton-1} \\ 
\hline \ensuremath{\theoryFmt{N} + \theoryFmt{Add} + \theoryFmt{Id}} & {$ \oforall x, y. (\osyn{id}(x) + y) \oeq (y + x) $} & \benchIdFmt{addCommutId} \\ 
\hline \ensuremath{\theoryFmt{N} + \theoryFmt{L} + \theoryFmt{App}} & {$ \oforall x, y, z. (x +\hspace{-2mm}+  (y +\hspace{-2mm}+  z)) \oeq ((x +\hspace{-2mm}+  y) +\hspace{-2mm}+  z) $} & \benchIdFmt{appendAssoc} \\ 
\hline \ensuremath{\theoryFmt{N} + \theoryFmt{L} + \theoryFmt{Pref} + \theoryFmt{App}} & {$ \oforall x, y. \osyn{pref}(x,(x +\hspace{-2mm}+  y)) $} & \benchIdFmt{appendMonoton} \\ 
\hline \ensuremath{\theoryFmt{N} + \theoryFmt{Eq}} & {$ \oforall x. \osyn{equal}(x,x,x) $} & \benchIdFmt{allEqRefl} \\ 
\hline \ensuremath{\theoryFmt{N} + \theoryFmt{Eq}} & {$ \oforall x, y, z. (\osyn{equal}(x,y,z) \oiff (x \oeq y \oand y \oeq z)) $} & \benchIdFmt{allEqDefsEquality} \\ 
\hline \ensuremath{\theoryFmt{N} + \theoryFmt{L} + \theoryFmt{App} + \theoryFmt{Rev}} & {$ \oforall x. \osyn{rev}(\osyn{rev}(x)) \oeq x $} & \benchIdFmt{revSelfInvers} \\ 
\hline \ensuremath{\theoryFmt{N} + \theoryFmt{L} + \theoryFmt{App} + \theoryFmt{Rev}} & {$ \oforall x. (x +\hspace{-2mm}+  (\osyn{rev}(x) +\hspace{-2mm}+  x)) \oeq ((x +\hspace{-2mm}+  \osyn{rev}(x)) +\hspace{-2mm}+  x) $} & \benchIdFmt{revAppend-0} \\ 
\hline \ensuremath{\theoryFmt{N} + \theoryFmt{L} + \theoryFmt{App} + \theoryFmt{Rev}} & {$ \oforall x. \osyn{rev}((x +\hspace{-2mm}+  (x +\hspace{-2mm}+  x))) \oeq \osyn{rev}(((x +\hspace{-2mm}+  x) +\hspace{-2mm}+  x)) $} & \benchIdFmt{revAppend-1} \\ 
\hline \ensuremath{\theoryFmt{N} + \theoryFmt{L} + \theoryFmt{App} + \theoryFmt{Rev} + \theoryFmt{Rev'}} & {$ \oforall x. \osyn{rev}(x) \oeq \osyn{rev'}(x) $} & \benchIdFmt{revsEqual} \\ 
\hline 
\end{tabular}
}

\newcommand\tabSolvers{\begin{tabular}{|c|c|c|c|}\hline 
Solver & Induction & Input format & Commandline Options\\\hline

\hline \text{\solver{Cvc4}} & \yes  & \Smtlib  & \code{ --quant-ind } \\ 
\hline \text{\solver{Cvc4Gen}} & \yes  & \Smtlib  & \code{ --conjecture-gen --quant-ind } \\ 
\hline \text{\solver{Z3}} & \no  & \Smtlib  & default mode \\ 
\hline \text{\solver{Vampire}} & \yes  & \Smtlib  & \code{ --schedule casc --induction struct } \\ 
\hline \text{\solver{VampireComplete}} & \yes  & \Smtlib  & \code{ --induction struct -s 1 } \\ 
\hline \text{\solver{Zipperposition}} & \yes  & \Zf  & default mode \\ 
\hline \text{\solver{ZipRewrite}} & \yes  & \Zf  & default mode \\ 
\hline \text{\solver{Zeno}} & \yes  & \ZenoHaskell  & default mode \\ 
\hline 
\end{tabular}
}


\newcommand\tabReflAxResults{\begin{tabular}{|c|c|c|c|c|c|c|c|} 
\resultsTableBenchmarkColHeader  & \resultTableSolverCell{\text{\solver{Cvc4}}} & \resultTableSolverCell{\text{\solver{Cvc4Gen}}} & \resultTableSolverCell{\text{\solver{Z3}}} & \resultTableSolverCell{\text{\solver{Vampire}}} & \resultTableSolverCell{\text{\solver{VampireComplete}}} & \resultTableSolverCell{\text{\solver{Zipperposition}}} & \resultTableSolverCell{\text{\solver{ZipRewrite}}}\\\hline

\hline \code{N+Leq+Add+Mul-ax0} & \resultTabYes  & \resultTabYes  & \resultTabYes  & \resultTabYes  & \resultTabYes  & \resultTabYes  & \resultTabYes  \\ 
\hline \code{N+Leq+Add+Mul-ax1} & \resultTabYes  & \resultTabYes  & \resultTabYes  & \resultTabNo   & \resultTabNo   & \resultTabNo   & \resultTabYes  \\ 
\hline \code{N+Leq+Add+Mul-ax2} & \resultTabYes  & \resultTabYes  & \resultTabYes  & \resultTabYes  & \resultTabYes  & \resultTabNo   & \resultTabYes  \\ 
\hline \code{N+Leq+Add+Mul-ax3} & \resultTabYes  & \resultTabYes  & \resultTabYes  & \resultTabNo   & \resultTabNo   & \resultTabNo   & \resultTabYes  \\ 
\hline \code{N+Leq+Add+Mul-ax4} & \resultTabYes  & \resultTabYes  & \resultTabYes  & \resultTabYes  & \resultTabYes  & \resultTabNo   & \resultTabYes  \\ 
\hline \code{N+Leq+Add+Mul-ax5} & \resultTabYes  & \resultTabYes  & \resultTabYes  & \resultTabNo   & \resultTabNo   & \resultTabNo   & \resultTabYes  \\ 
\hline \code{N+L+Pref+App-ax0} & \resultTabYes  & \resultTabYes  & \resultTabYes  & \resultTabYes  & \resultTabYes  & \resultTabYes  & \resultTabYes  \\ 
\hline \code{N+L+Pref+App-ax1} & \resultTabYes  & \resultTabYes  & \resultTabYes  & \resultTabYes  & \resultTabYes  & \resultTabNo   & \resultTabYes  \\ 
\hline \code{N+L+Pref+App-ax2} & \resultTabYes  & \resultTabYes  & \resultTabYes  & \resultTabNo   & \resultTabNo   & \resultTabNo   & \resultTabNo   \\ 
\hline \code{N+L+Pref+App-ax3} & \resultTabYes  & \resultTabYes  & \resultTabYes  & \resultTabYes  & \resultTabYes  & \resultTabNo   & \resultTabYes  \\ 
\hline \code{N+L+Pref+App-ax4} & \resultTabYes  & \resultTabYes  & \resultTabYes  & \resultTabNo   & \resultTabNo   & \resultTabNo   & \resultTabYes  \\ 
\hline 
\end{tabular}
}
\newcommand\tabReflConjResults{\begin{tabular}{|c|c|c|c|c|c|c|c|} 
\resultsTableBenchmarkColHeader  & \resultTableSolverCell{\text{\solver{Cvc4}}} & \resultTableSolverCell{\text{\solver{Cvc4Gen}}} & \resultTableSolverCell{\text{\solver{Z3}}} & \resultTableSolverCell{\text{\solver{Vampire}}} & \resultTableSolverCell{\text{\solver{VampireComplete}}} & \resultTableSolverCell{\text{\solver{Zipperposition}}} & \resultTableSolverCell{\text{\solver{ZipRewrite}}}\\\hline

\hline \code{eqRefl} & \resultTabYes  & \resultTabYes  & \resultTabYes  & \resultTabYes  & \resultTabYes  & \resultTabYes  & \resultTabYes  \\ 
\hline \code{eqTrans} & \resultTabYes  & \resultTabYes  & \resultTabYes  & \resultTabNo   & \resultTabNo   & \resultTabNo   & \resultTabYes  \\ 
\hline \code{excludedMiddle-0} & \resultTabYes  & \resultTabYes  & \resultTabYes  & \resultTabYes  & \resultTabYes  & \resultTabYes  & \resultTabYes  \\ 
\hline \code{excludedMiddle-1} & \resultTabYes  & \resultTabYes  & \resultTabYes  & \resultTabYes  & \resultTabYes  & \resultTabYes  & \resultTabYes  \\ 
\hline \code{universalInstance} & \resultTabNo   & \resultTabNo   & \resultTabYes  & \resultTabYes  & \resultTabYes  & \resultTabYes  & \resultTabYes  \\ 
\hline \code{contraposition-0} & \resultTabYes  & \resultTabYes  & \resultTabYes  & \resultTabYes  & \resultTabYes  & \resultTabYes  & \resultTabYes  \\ 
\hline \code{contraposition-1} & \resultTabYes  & \resultTabYes  & \resultTabYes  & \resultTabNo   & \resultTabNo   & \resultTabNo   & \resultTabYes  \\ 
\hline \code{currying-0} & \resultTabYes  & \resultTabYes  & \resultTabYes  & \resultTabYes  & \resultTabYes  & \resultTabNo   & \resultTabYes  \\ 
\hline \code{currying-1} & \resultTabYes  & \resultTabYes  & \resultTabYes  & \resultTabNo   & \resultTabNo   & \resultTabNo   & \resultTabYes  \\ 
\hline \code{addGround-0} & \resultTabYes  & \resultTabYes  & \resultTabYes  & \resultTabYes  & \resultTabYes  & \resultTabYes  & \resultTabYes  \\ 
\hline \code{addGround-1} & \resultTabYes  & \resultTabYes  & \resultTabNo   & \resultTabNo   & \resultTabNo   & \resultTabYes  & \resultTabYes  \\ 
\hline \code{addExists} & \resultTabNo   & \resultTabNo   & \resultTabNo   & \resultTabNo   & \resultTabNo   & \resultTabNo   & \resultTabYes  \\ 
\hline \code{existsZeroAdd} & \resultTabNo   & \resultTabNo   & \resultTabNo   & \resultTabNo   & \resultTabNo   & \resultTabNo   & \resultTabNo   \\ 
\hline \code{mulGround} & \resultTabYes  & \resultTabYes  & \resultTabYes  & \resultTabNo   & \resultTabNo   & \resultTabNo   & \resultTabYes  \\ 
\hline \code{mulExists} & \resultTabNo   & \resultTabNo   & \resultTabNo   & \resultTabNo   & \resultTabNo   & \resultTabNo   & \resultTabYes  \\ 
\hline \code{existsZeroMul} & \resultTabNo   & \resultTabNo   & \resultTabNo   & \resultTabNo   & \resultTabNo   & \resultTabNo   & \resultTabNo   \\ 
\hline \code{appendGround-0} & \resultTabYes  & \resultTabYes  & \resultTabYes  & \resultTabYes  & \resultTabYes  & \resultTabYes  & \resultTabYes  \\ 
\hline \code{appendGround-1} & \resultTabYes  & \resultTabYes  & \resultTabYes  & \resultTabNo   & \resultTabNo   & \resultTabYes  & \resultTabYes  \\ 
\hline \code{appendExists} & \resultTabNo   & \resultTabNo   & \resultTabNo   & \resultTabNo   & \resultTabNo   & \resultTabNo   & \resultTabYes  \\ 
\hline \code{existsNil} & \resultTabNo   & \resultTabNo   & \resultTabYes  & \resultTabYes  & \resultTabYes  & \resultTabNo   & \resultTabYes  \\ 
\hline 
\end{tabular}
}
\newcommand\tabIndResults{\begin{tabular}{|c|c|c|c|c|c|c|c|c|c|c|c|c|c|c|} 
\resultsTableBenchmarkColHeader  & \resultTableSolverCell{\text{\solver{Cvc4}}} & \resultTableSolverCell{\text{\solver{Cvc4Gen}}} & \resultTableSolverCell{\text{\solver{Vampire}}} & \resultTableSolverCell{\text{\solver{VampireComplete}}} & \resultTableSolverCell{\text{\solver{Zipperposition}}} & \resultTableSolverCell{\text{\solver{ZipRewrite}}} & \resultTableSolverCell{\text{\solver{Zeno}}} & \resultTableSolverCell{\ensuremath{\reflIndSolver{\text{\solver{Cvc4}}}}} & \resultTableSolverCell{\ensuremath{\reflIndSolver{\text{\solver{Cvc4Gen}}}}} & \resultTableSolverCell{\ensuremath{\reflIndSolver{\text{\solver{Z3}}}}} & \resultTableSolverCell{\ensuremath{\reflIndSolver{\text{\solver{Vampire}}}}} & \resultTableSolverCell{\ensuremath{\reflIndSolver{\text{\solver{VampireComplete}}}}} & \resultTableSolverCell{\ensuremath{\reflIndSolver{\text{\solver{Zipperposition}}}}} & \resultTableSolverCell{\ensuremath{\reflIndSolver{\text{\solver{ZipRewrite}}}}}\\\hline

\hline \code{addCommut} & \resultTabYes  & \resultTabYes  & \resultTabYes  & \resultTabYes  & \resultTabYes  & \resultTabYes  & \resultTabYes  & \resultTabNo   & \resultTabNo   & \resultTabNo   & \resultTabNo   & \resultTabNo   & \resultTabNo   & \resultTabNo   \\ 
\hline \code{mulCommut} & \resultTabNo   & \resultTabNo   & \resultTabNo   & \resultTabNo   & \resultTabNo   & \resultTabNo   & \resultTabNo   & \resultTabNo   & \resultTabNo   & \resultTabNo   & \resultTabNo   & \resultTabNo   & \resultTabNo   & \resultTabNo   \\ 
\hline \code{addAssoc} & \resultTabYes  & \resultTabYes  & \resultTabYes  & \resultTabYes  & \resultTabYes  & \resultTabYes  & \resultTabYes  & \resultTabNo   & \resultTabNo   & \resultTabNo   & \resultTabNo   & \resultTabNo   & \resultTabNo   & \resultTabNo   \\ 
\hline \code{mulAssoc} & \resultTabNo   & \resultTabNo   & \resultTabNo   & \resultTabNo   & \resultTabNo   & \resultTabNo   & \resultTabNo   & \resultTabNo   & \resultTabNo   & \resultTabNo   & \resultTabNo   & \resultTabNo   & \resultTabNo   & \resultTabNo   \\ 
\hline \code{addNeutral} & \resultTabYes  & \resultTabYes  & \resultTabYes  & \resultTabYes  & \resultTabYes  & \resultTabYes  & \resultTabYes  & \resultTabNo   & \resultTabNo   & \resultTabNo   & \resultTabNo   & \resultTabNo   & \resultTabNo   & \resultTabNo   \\ 
\hline \code{addNeutral-0} & \resultTabYes  & \resultTabYes  & \resultTabYes  & \resultTabYes  & \resultTabYes  & \resultTabYes  & \resultTabYes  & \resultTabNo   & \resultTabNo   & \resultTabNo   & \resultTabNo   & \resultTabNo   & \resultTabNo   & \resultTabNo   \\ 
\hline \code{addNeutral-1} & \resultTabYes  & \resultTabYes  & \resultTabYes  & \resultTabYes  & \resultTabYes  & \resultTabYes  & \resultTabYes  & \resultTabNo   & \resultTabNo   & \resultTabNo   & \resultTabNo   & \resultTabNo   & \resultTabNo   & \resultTabNo   \\ 
\hline \code{mulZero} & \resultTabYes  & \resultTabYes  & \resultTabYes  & \resultTabYes  & \resultTabYes  & \resultTabYes  & \resultTabYes  & \resultTabNo   & \resultTabNo   & \resultTabNo   & \resultTabNo   & \resultTabNo   & \resultTabNo   & \resultTabYes  \\ 
\hline \code{distr-0} & \resultTabNo   & \resultTabNo   & \resultTabNo   & \resultTabNo   & \resultTabNo   & \resultTabNo   & \resultTabNo   & \resultTabNo   & \resultTabNo   & \resultTabNo   & \resultTabNo   & \resultTabNo   & \resultTabNo   & \resultTabNo   \\ 
\hline \code{distr-1} & \resultTabNo   & \resultTabNo   & \resultTabNo   & \resultTabNo   & \resultTabNo   & \resultTabYes  & \resultTabNo   & \resultTabNo   & \resultTabNo   & \resultTabNo   & \resultTabNo   & \resultTabNo   & \resultTabNo   & \resultTabNo   \\ 
\hline \code{leqTrans} & \resultTabNo   & \resultTabNo   & \resultTabNo   & \resultTabNo   & \resultTabNo   & \resultTabNo   & \resultTabNA   & \resultTabNo   & \resultTabNo   & \resultTabNo   & \resultTabNo   & \resultTabNo   & \resultTabNo   & \resultTabNo   \\ 
\hline \code{zeroMin} & \resultTabYes  & \resultTabYes  & \resultTabYes  & \resultTabYes  & \resultTabYes  & \resultTabYes  & \resultTabNA   & \resultTabNo   & \resultTabNo   & \resultTabYes  & \resultTabYes  & \resultTabNo   & \resultTabNo   & \resultTabYes  \\ 
\hline \code{addMonoton-0} & \resultTabNo   & \resultTabNo   & \resultTabNo   & \resultTabNo   & \resultTabNo   & \resultTabNo   & \resultTabNA   & \resultTabNo   & \resultTabNo   & \resultTabNo   & \resultTabNo   & \resultTabNo   & \resultTabNo   & \resultTabNo   \\ 
\hline \code{addMonoton-1} & \resultTabNo   & \resultTabNo   & \resultTabNo   & \resultTabNo   & \resultTabNo   & \resultTabNo   & \resultTabNA   & \resultTabNo   & \resultTabNo   & \resultTabNo   & \resultTabNo   & \resultTabNo   & \resultTabNo   & \resultTabNo   \\ 
\hline \code{addCommutId} & \resultTabNo   & \resultTabYes  & \resultTabYes  & \resultTabYes  & \resultTabYes  & \resultTabYes  & \resultTabYes  & \resultTabNo   & \resultTabNo   & \resultTabNo   & \resultTabNo   & \resultTabNo   & \resultTabNo   & \resultTabNo   \\ 
\hline \code{appendAssoc} & \resultTabYes  & \resultTabYes  & \resultTabYes  & \resultTabYes  & \resultTabYes  & \resultTabYes  & \resultTabYes  & \resultTabNo   & \resultTabNo   & \resultTabNo   & \resultTabNo   & \resultTabNo   & \resultTabNo   & \resultTabNo   \\ 
\hline \code{appendMonoton} & \resultTabYes  & \resultTabYes  & \resultTabYes  & \resultTabYes  & \resultTabYes  & \resultTabYes  & \resultTabYes  & \resultTabNo   & \resultTabNo   & \resultTabNo   & \resultTabNo   & \resultTabNo   & \resultTabNo   & \resultTabNo   \\ 
\hline \code{allEqRefl} & \resultTabYes  & \resultTabYes  & \resultTabYes  & \resultTabYes  & \resultTabYes  & \resultTabYes  & \resultTabNo   & \resultTabNo   & \resultTabNo   & \resultTabYes  & \resultTabNo   & \resultTabNo   & \resultTabNo   & \resultTabNo   \\ 
\hline \code{allEqDefsEquality} & \resultTabYes  & \resultTabYes  & \resultTabNo   & \resultTabNo   & \resultTabYes  & \resultTabYes  & \resultTabNo   & \resultTabNo   & \resultTabNo   & \resultTabNo   & \resultTabNo   & \resultTabNo   & \resultTabNo   & \resultTabNo   \\ 
\hline \code{revSelfInvers} & \resultTabNo   & \resultTabNo   & \resultTabNo   & \resultTabNo   & \resultTabYes  & \resultTabNo   & \resultTabNo   & \resultTabNo   & \resultTabNo   & \resultTabNo   & \resultTabNo   & \resultTabNo   & \resultTabNo   & \resultTabNo   \\ 
\hline \code{revAppend-0} & \resultTabNo   & \resultTabNo   & \resultTabNo   & \resultTabNo   & \resultTabNo   & \resultTabYes  & \resultTabNo   & \resultTabNo   & \resultTabNo   & \resultTabNo   & \resultTabNo   & \resultTabNo   & \resultTabNo   & \resultTabNo   \\ 
\hline \code{revAppend-1} & \resultTabNo   & \resultTabNo   & \resultTabNo   & \resultTabNo   & \resultTabNo   & \resultTabYes  & \resultTabNo   & \resultTabNo   & \resultTabNo   & \resultTabNo   & \resultTabNo   & \resultTabNo   & \resultTabNo   & \resultTabNo   \\ 
\hline \code{revsEqual} & \resultTabNo   & \resultTabNo   & \resultTabNo   & \resultTabNo   & \resultTabNo   & \resultTabNo   & \resultTabNo   & \resultTabNo   & \resultTabNo   & \resultTabNo   & \resultTabNo   & \resultTabNo   & \resultTabNo   & \resultTabNo   \\ 
\hline 
\end{tabular}
}

\newcommand\tabRefl{
  \begin{tabular}{|c|c|c|c|c|c|c|c|} 
  \resultsTableBenchmarkColHeader  & \resultTableSolverCell{\text{\solver{Cvc4}}} & \resultTableSolverCell{\text{\solver{Cvc4Gen}}} & \resultTableSolverCell{\text{\solver{Z3}}} & \resultTableSolverCell{\text{\solver{Vampire}}} & \resultTableSolverCell{\text{\solver{VampireComplete}}} & \resultTableSolverCell{\text{\solver{Zipperposition}}} & \resultTableSolverCell{\text{\solver{ZipRewrite}}}\\\hline

  \multicolumn{8}{|c|}{\pReflEasy}\\
  \hline \code{N+Leq+Add+Mul-ax0} & \resultTabYes  & \resultTabYes  & \resultTabYes  & \resultTabYes  & \resultTabYes  & \resultTabYes  & \resultTabYes  \\ 
  \hline \code{N+Leq+Add+Mul-ax1} & \resultTabYes  & \resultTabYes  & \resultTabYes  & \resultTabNo   & \resultTabNo   & \resultTabNo   & \resultTabYes  \\ 
  \hline \code{N+Leq+Add+Mul-ax2} & \resultTabYes  & \resultTabYes  & \resultTabYes  & \resultTabYes  & \resultTabYes  & \resultTabNo   & \resultTabYes  \\ 
  \hline \code{N+Leq+Add+Mul-ax3} & \resultTabYes  & \resultTabYes  & \resultTabYes  & \resultTabNo   & \resultTabNo   & \resultTabNo   & \resultTabYes  \\ 
  \hline \code{N+Leq+Add+Mul-ax4} & \resultTabYes  & \resultTabYes  & \resultTabYes  & \resultTabYes  & \resultTabYes  & \resultTabNo   & \resultTabYes  \\ 
  \hline \code{N+Leq+Add+Mul-ax5} & \resultTabYes  & \resultTabYes  & \resultTabYes  & \resultTabNo   & \resultTabNo   & \resultTabNo   & \resultTabYes  \\ 
  \hline \code{N+L+Pref+App-ax0} & \resultTabYes  & \resultTabYes  & \resultTabYes  & \resultTabYes  & \resultTabYes  & \resultTabYes  & \resultTabYes  \\ 
  \hline \code{N+L+Pref+App-ax1} & \resultTabYes  & \resultTabYes  & \resultTabYes  & \resultTabYes  & \resultTabYes  & \resultTabNo   & \resultTabYes  \\ 
  \hline \code{N+L+Pref+App-ax2} & \resultTabYes  & \resultTabYes  & \resultTabYes  & \resultTabNo   & \resultTabNo   & \resultTabNo   & \resultTabNo   \\ 
  \hline \code{N+L+Pref+App-ax3} & \resultTabYes  & \resultTabYes  & \resultTabYes  & \resultTabYes  & \resultTabYes  & \resultTabNo   & \resultTabYes  \\ 
  \hline \code{N+L+Pref+App-ax4} & \resultTabYes  & \resultTabYes  & \resultTabYes  & \resultTabNo   & \resultTabNo   & \resultTabNo   & \resultTabYes  \\ 

  \hline 
  \hline 

  \multicolumn{8}{|c|}{\pReflHard}\\
  \hline \code{eqRefl} & \resultTabYes  & \resultTabYes  & \resultTabYes  & \resultTabYes  & \resultTabYes  & \resultTabYes  & \resultTabYes  \\ 
  \hline \code{eqTrans} & \resultTabYes  & \resultTabYes  & \resultTabYes  & \resultTabNo   & \resultTabNo   & \resultTabNo   & \resultTabYes  \\ 
  \hline \code{excludedMiddle-0} & \resultTabYes  & \resultTabYes  & \resultTabYes  & \resultTabYes  & \resultTabYes  & \resultTabYes  & \resultTabYes  \\ 
  \hline \code{excludedMiddle-1} & \resultTabYes  & \resultTabYes  & \resultTabYes  & \resultTabYes  & \resultTabYes  & \resultTabYes  & \resultTabYes  \\ 
  \hline \code{universalInstance} & \resultTabNo   & \resultTabNo   & \resultTabYes  & \resultTabYes  & \resultTabYes  & \resultTabYes  & \resultTabYes  \\ 
  \hline \code{contraposition-0} & \resultTabYes  & \resultTabYes  & \resultTabYes  & \resultTabYes  & \resultTabYes  & \resultTabYes  & \resultTabYes  \\ 
  \hline \code{contraposition-1} & \resultTabYes  & \resultTabYes  & \resultTabYes  & \resultTabNo   & \resultTabNo   & \resultTabNo   & \resultTabYes  \\ 
  \hline \code{currying-0} & \resultTabYes  & \resultTabYes  & \resultTabYes  & \resultTabYes  & \resultTabYes  & \resultTabNo   & \resultTabYes  \\ 
  \hline \code{currying-1} & \resultTabYes  & \resultTabYes  & \resultTabYes  & \resultTabNo   & \resultTabNo   & \resultTabNo   & \resultTabYes  \\ 
  \hline \code{addGround-0} & \resultTabYes  & \resultTabYes  & \resultTabYes  & \resultTabYes  & \resultTabYes  & \resultTabYes  & \resultTabYes  \\ 
  \hline \code{addGround-1} & \resultTabYes  & \resultTabYes  & \resultTabNo   & \resultTabNo   & \resultTabNo   & \resultTabYes  & \resultTabYes  \\ 
  \hline \code{addExists} & \resultTabNo   & \resultTabNo   & \resultTabNo   & \resultTabNo   & \resultTabNo   & \resultTabNo   & \resultTabYes  \\ 
  \hline \code{existsZeroAdd} & \resultTabNo   & \resultTabNo   & \resultTabNo   & \resultTabNo   & \resultTabNo   & \resultTabNo   & \resultTabNo   \\ 
  \hline \code{mulGround} & \resultTabYes  & \resultTabYes  & \resultTabYes  & \resultTabNo   & \resultTabNo   & \resultTabNo   & \resultTabYes  \\ 
  \hline \code{mulExists} & \resultTabNo   & \resultTabNo   & \resultTabNo   & \resultTabNo   & \resultTabNo   & \resultTabNo   & \resultTabYes  \\ 
  \hline \code{existsZeroMul} & \resultTabNo   & \resultTabNo   & \resultTabNo   & \resultTabNo   & \resultTabNo   & \resultTabNo   & \resultTabNo   \\ 
  \hline \code{appendGround-0} & \resultTabYes  & \resultTabYes  & \resultTabYes  & \resultTabYes  & \resultTabYes  & \resultTabYes  & \resultTabYes  \\ 
  \hline \code{appendGround-1} & \resultTabYes  & \resultTabYes  & \resultTabYes  & \resultTabNo   & \resultTabNo   & \resultTabYes  & \resultTabYes  \\ 
  \hline \code{appendExists} & \resultTabNo   & \resultTabNo   & \resultTabNo   & \resultTabNo   & \resultTabNo   & \resultTabNo   & \resultTabYes  \\ 
  \hline \code{existsNil} & \resultTabNo   & \resultTabNo   & \resultTabYes  & \resultTabYes  & \resultTabYes  & \resultTabNo   & \resultTabYes  \\ 
  \hline 
  \end{tabular}
}

\newcommand\True[1]{\rmodelsA{\rez}{#1}}
\newcommand\problem[1]{\ensuremath{\mathbf{#1}}\xspace}
\newcommand\pRefl{\problem{Refl}}
\newcommand\pInd{\problem{Ind}}

\newcommand\pReflEasy{\problem{Refl_0}}
\newcommand\pReflHard{\problem{Refl_1}}

In order to evaluate the practical viability of the techniques introduced Sections~\ref{sect:reflExt}-\ref{sect:finiteInd}, we performed two set of experiments, denoted as $\pRefl$ and $\pInd$ and described next. 

\paragraph{Setup}
Note that our work introduces many new function symbols, and axioms which might blow up the proof search space, even if induction is not involved at all. Therefore, in our first experiment  $\pRefl$ we wanted to evaluate the feasibility of reasoning in the reflective extension of a theory. 

\IfExtended{
Since many of our benchmarks have the same axioms, but different conclusions a list of all the base theories is given in table~\ref{tab:theories}.

\begin{table}
  \center
  \tabAllTheories%
  \caption{Theories used for the experiments.}%
  \label{tab:theories}
\end{table}

}

\pRefl itself consists of two groups of benchmarks \pReflEasy{}, and \pReflHard{}. \pRefl{} is the simplest one. For every theory $\T$ in some set of base theories, and every axiom $\alpha \in \T$ we try to proof the validity of $\refl\T \vdash \True{\goedel{\alpha}}$. Since we established that $\lambda x. \True{x}$ is the truth predicate of $\T$, and the fact that $\alpha$ is an axiom, we know that these consequence assertions indeed hold.
\pReflHard{} involves reasoning in the reflective extension $\rT$ of some theory as well. But in this case not the reflective version of the axioms, but the reflective versions of some simple consequence of $\T$ are to be proven. 
\IfExtended{
Table~\ref{tab:reflConj}, lists all conjectures and the related theories, that are to be proven in this set of benchmarks.

\begin{table}
  \center
  \scalebox{0.9}{%
    \tabReflConj{}
  }
  \caption{Conjectures and theories used for the benchmark set \pReflHard{}. The decimal numbers used are abbreviations for the corresponding numerals.}%
  \label{tab:reflConj}
\end{table}
}

The benchmarks in the second experiment \pInd{} are a set of crafted properties that require inductive reasoning. Every problem $\T \vDash \phi$ in this set of benchmarks is addressed in two ways. Firstly, proving it directly for the solvers that support induction natively, and secondly, translating the problem to $\reflInd{\T} \vDash \phi$. %
\IfNotExtended{%
For a description of the exact axioms and conjectures used in each of our benchmarks we refer to the extended version of this paper \todo{cite}.
}
\IfExtended{%
Table~\ref{tab:indConj} lists the base theories and conjectures used for this experiment.

\begin{table}
  \center
  \scalebox{0.8}{%
    \tabIndConj{}
  }
  \caption{Conjectures and theories used for the benchmark set \pInd{}}%
  \label{tab:indConj}
\end{table}
}

All benchmarks, as well as a program for generating reflective, and reflective inductive extensions of theories, and G\"odel encodings for conjectures can be found at \github\footnote{\url{https://github.com/joe-hauns/msc-automating-induction-via-reflection}}. As the different solvers we used for evaluation support different input formats, our tool supports serializing problems into these various formats.

We used two (non-disjoint) sets of solvers. Firstly, solvers that support induction natively, and secondly various general-purpose theorem provers that are able to deal with multi-sorted quantified first-order logic, hence induction using the reflective extension. 

The solvers considered where the SMT-solvers \cvc4 and \z3, the superposition-based first-order theorem prover \vampire, the higher-order theorem prover \zipperposition that uses a combination of superposition and term rewriting, and the inductive theorem prover \zeno, that is designed to proof inductive properties of a \haskell-like programming language. Since \vampire{} in many cases uses incomplete strategy, per default it was run with a complete strategy forced as well. This configuration if referred to as \solver{VampireComplete}. 
\zipperposition{} supports replacing equalities by dedicated rewrite rules, which comes at the cost of the theoretical loss of some provable problems, but yields a significant gain of performance in practice. \zipperposition{} with these rewrite rules enabled will be referred to as \solver{ZipRewrite}. 
\cvc4 allows for theory exploration which was shown to be helpful for inductive reasoning in \cite{InductionWithGeneralization20}. \cvc4 with this heuristic enabled is referred to as \solver{Cvc4Gen}. 

We ran each solver with a timeout of 10 seconds per problem.

\begin{table}
  \center
  \tabRefl%
  \caption{%
    Results of the experiment \pRefl.
  }%
  \label{tab:results-refl-results}
\end{table}

\paragraph{Results}
In the first part of Table~\ref{tab:results-refl-results} we can see the results of solvers proving reflective versions of axioms. What is striking is that the SMT solvers \cvc4, and \z3, can solver all benchmarks of this category, while the problem seems to be harder for the saturation based theorem provers. Further \ZipRewrite does pretty well in this class of benchmarks as well.  A potential reason for this difference in performance between the ordinary saturation approach and \ZipRewrite might have to do with the following: For \ZipRewrite equalities for function definitions of the reflective extensions are translated to rewrite rules that are oriented in way that they would intuitively be oriented by a human, this means that for example the axiom \eqref{refl_ax:evalTerm} can be evaluated as one would intuitively do. In contrast \vampire, using superposition with the Knuth-Bendix simplification ordering will orient this equality in the wrong way, which means that it won't be able to evaluate it in the intuitive way, which might be the reason for the difference in performance.

The second part of the table shows that the performance of the SMT-solvers drops as soon as more complex reasoning is involved. Especially the problems with conjectures involving existential quantification\footnote{These problem ids contain the substring ``exists'' in their id.} are hardly solved by the SMT solvers. This is not surprising since SMT solvers target at solving quantifier-free fragments of first-order logic.

\begin{table}%
  \center
  \scalebox{0.9}{%
    \tabIndResults
  }
  \caption{%
    Lists the results of running solvers on the benchmark set \pInd. For every solver $\solver{Slvr}$ that supports full first-order logic with equality as input, there is a solver $\reflInd{\solver{Slvr}}$ using the reflective inductive theory as an input instead of using the solvers native handling of induction. The greyed out cells mean that the problem cannot be translated to the solvers input format.
  }%
  \label{tab:results-refl-ind}
\end{table}

Table~\ref{tab:results-refl-ind} lists the results of the final experiment $\pInd$. As the first experiments have shown reasoning in the reflective theories is hard even for very simple conjectures, it is not surprising that it is even harder for problems that require inductive reasoning to solve. 
Nevertheless there are some problems that can be solved using the reflective inductive extension instead of built-in induction heuristics. The most striking result is that \z3 is able to solve benchmarks that involve induction, even though it is a SMT-solver without any support for inductive reasoning.

  \section{Conclusion}

It is mathematical practice to define infinite sets of axioms as schemes of formulas. Alas these schemes of axioms are not part of standard input syntax of today's theorem proves. 
In order to circumvent this shortcoming, we developed a method to express these schematic definitions in the language of first-order logic by means of a conservative extension, which we called the reflective extension of a theory. 
We showed that this reflective extension is indeed a conservative extension of the base theory. It contains a truth predicate
which allows  us  to quantify over formulas within the language of first-order logic. 

We replaced the first-order induction scheme of \PA{} by the axioms needed for the reflective extension and a single additional axiom, called the reflective induction axiom. We  proved that the resulting theory is indeed a conservative extension of \PA. 
Further,  we demonstrated how to replace the induction scheme of a theory with arbitrary inductive datatypes. This kind of conservative extension is what we called the reflective inductive extension.

Our experiments show that reasoning in the reflective extension of a theory is hard for modern theorem provers, even for very simple problems. Despite the poor performance in general, we have a positive result serving as a proof of concept of our method, namely that the SMT-solver \solver{Z3}, which does not support induction natively was able to solve problems that require inductive reasoning. 

Investigating our encoding in relation with the proof systems supported by the Dedukti framework~\cite{Dedukti} is an interesting line for further work. Further we are interested to explore which different proof search heuristics can be used to make our technique feasible for practical applications.

  \paragraph{Acknowledgements}
  This work has been supported by the ERC consolidator grant 2020
  ARTIST 101002685, the ERC starting grant 2014 SYMCAR 639270, the EPSRC grant EP/P03408X/1, and the ERC proof of concept grant 2018 SYMELS 842066.

  \bibliographystyle{eptcs}
  \bibliography{index.bib}

\end{document}